\def\BibTeX{{\rm B\kern-.05em{\sc i\kern-.025em b}\kern-.08em
    T\kern-.1667em\lower.7ex\hbox{E}\kern-.125emX}}
\newtheorem{theorem}{Theorem}
\newtheorem{lemma}{Lemma}
\newtheorem*{proof*}{Proof}
\newtheorem{remark}{Remark}
\newtheorem{proposition}{Proposition}
\begin{document}
\title{Sparse Array Beamformer Design via ADMM \\ (with updated Appendix D)}
\author{Huiping Huang, \textit{Student Member, IEEE}, Hing Cheung So, \textit{Fellow, IEEE}, and Abdelhak M. Zoubir, \textit{Fellow, IEEE} \vspace{-8mm}
\thanks{H. Huang and A. M. Zoubir are with Department of Electrical Engineering and Information Technology, Technische Universität Darmstadt, Germany (emails: h.huang@spg.tu-darmstadt.de, zoubir@spg.tu-darmstadt.de). H. C. So is with Department of Electrical Engineering, City University of Hong Kong, China (email: hcso@ee.cityu.edu.hk). The work of H. Huang is supported by the Graduate School CE within the Centre for Computational Engineering at Technische Universität Darmstadt. Parts of this paper were presented at SAM 2022 \cite{Huang2022June}.}}

\markboth{}
{Sparse Array Beamformer Design via ADMM}

\maketitle


\begin{abstract}
In this paper, we devise a sparse array design algorithm for adaptive beamforming. Our strategy is based on finding a sparse beamformer weight to maximize the output signal-to-interference-plus-noise ratio (SINR). The proposed method utilizes the alternating direction method of multipliers (ADMM), and admits closed-form solutions at each ADMM iteration. The algorithm convergence properties are analyzed by showing the monotonicity and boundedness of the augmented Lagrangian function. In addition, we prove that the proposed algorithm converges to the set of Karush-Kuhn-Tucker stationary points. Numerical results exhibit its excellent performance, which is comparable to that of the exhaustive search approach, slightly better than those of the state-of-the-art solvers, including the semidefinite relaxation (SDR), its variant (SDR-V), and the successive convex approximation (SCA) approaches, and significantly outperforms several other sparse array design strategies, in terms of output SINR. Moreover, the proposed ADMM algorithm outperforms the SDR, SDR-V, and SCA methods, in terms of computational complexity.
\end{abstract}

\begin{IEEEkeywords}
Adaptive beamforming, ADMM, semidefinite relaxation, sparse array design, successive convex approximation, output SINR.
\end{IEEEkeywords}

\section{Introduction}
\label{introduction}
\IEEEPARstart{A}{daptive} arrays have been widely applied in diverse practical applications, such as radar, sonar, wireless communications, to name just a few \cite{Gabriel1976}. One of their important functions is beamforming, which is to extract the signal-of-interest (SOI) while suppressing interference and noise \cite{Vorobyov2014}. It has been reported that the performance of beamforming is affected by not only the beamformer weight, but also the array configuration \cite{Lin1982}. In this sense, conventional uniform arrays may not be the optimal choices for adaptive beamformer design. On the other hand, since sparse arrays achieve increased array aperture and degrees of freedom while reducing the hardware complexity, as compared to conventional uniform arrays. Thus, sparse arrays could be a better option for adaptive beamformer design.

On the other hand, many research works have considered beamformer design for the communication systems of hybrid structures \cite{Zhang2005, Sudarshan2006, Venkateswaran2010, Heath2016}. Hybrid architectures are one approach for providing enhanced benefits of multiple-input multiple-output (MIMO) communication at millimeter-wave (mmWave) frequencies. In such a hybrid system, there are smaller number of radio-frequency (RF) chains than antennas. Note that RF chains are much more expensive than antennas. Hybrid structure offers a compromise between system performance and hardware complexity. For such a configuration, it raises a question on how to adaptively switch the available RF chains to the corresponding subset of antennas, which can be interpreted as sparse array beamformer design. 

\subsection{Related Works}

In recent years, several strategies for designing sparse array beamformers have been proposed, see for example \cite{Hawes2012, Zhou2018, Shi2019, Shi2014, Wang2014, Wang2021, Wang2023, Shrestha2023, Hamza2019, Hamza2021} and references therein. \cite{Hawes2012} investigates sparse array design problem in the presence of steering vector mismatch. \cite{Zhou2018} focuses on a specific sparse array, i.e., coprime array, for adaptive beamforming. A constrained normalized least-mean-square approach is developed in \cite{Shi2019}. However, there is no theoretical analyses of the convergence of the algorithm. The authors in \cite{Shi2014} consider the problem of resource allocation and interference management, and propose a group sparse beamforming method. A reconfigurable adaptive antenna array strategy is proposed to minimize the spatial correlation coefficient between the desired signal and the interference \cite{Wang2014}. \cite{Wang2021} studies the sparse array beamforming design problem by firstly dividing the whole array into several non-overloped sub-groups, and then selecting one antenna per sub-group. By using the regularized switching network, \cite{Wang2023} proposes a cognitive sparse beamformer, which is adaptive to the environmental dynamics. In \cite{Shrestha2023}, the authors investigate joint beamforming and antenna selection problem under imperfect channel state information, and develop an efficient branch and bound (B\&B) approach and a machine learning based scheme. \cite{Hamza2019} and \cite{Hamza2021} consider sparse array design for achieving maximum signal-to-interference-plus-noise ratio (SINR) for different scenarios of single point source, multiple point sources, and wideband sources. Different from most of the existing works, our goal is to maximize the output SINR by using a smaller number of RF chains (compared to the number of antennas), and our main contributions include developing an efficient algorithm to solve the resulting problem and providing theoretical guarantee of the convergence property of the proposed algorithm.

From the perspective of algorithm, existing methods can be roughly divided into three categories: Greedy based \cite{Shi2014, Wang2014, Hegde2018}, machine learning based \cite{Elbir2020, Diamantaras2021, Xu2021, Vu2021, Shrestha2023}, and optimization based \cite{Shrestha2023, Cheng2013, Fischer2018, Li2018, Nai2010, Mehanna2012, Mehanna2013, Gkizeli2014, Hawes2014, Demir2015, Zheng2020, Hamza2019, Hamza2021, Zhai2017, Arora2021, Zheng2021, Yu2019, Chen2021, Dan2022} approaches. The greedy procedure in \cite{Shi2014, Wang2014, Hegde2018} largely reduces the combinatorial exploration space, but could result in a highly suboptimal solution. Machine learning techniques \cite{Elbir2020, Diamantaras2021, Xu2021, Vu2021, Shrestha2023} require prior data for their training step, which might be unavailable in some practical scenarios.    
On the other hand, optimization based methods include B\&B \cite{Shrestha2023}, mixed-integer programming (MIP) \cite{Cheng2013, Fischer2018, Li2018}, semidefinite relaxation (SDR) \cite{Mehanna2012, Mehanna2013, Hamza2019, Hamza2021, Zheng2020, Dan2022}, and successive convex approximation (SCA) \cite{Zhai2017, Arora2021, Zheng2021}. B\&B and MIP are capable of finding the global optimum at the cost of a large computational burden. SDR and SCA based methods are computationally expensive when the dimension of the resulting matrix is high \cite{Mehanna2015, Huang2016}. Besides, the relaxation nature of SDR usually leads to a solution with a rank not being one, in which case extra post-processing based on randomization is needed \cite{Sidiropoulos2006}. Moreover, note that the SDR methods in \cite{Mehanna2012, Mehanna2013, Hamza2019, Hamza2021} utilize the $\ell_{1}$-norm square instead of the $\ell_{1}$-norm for sparsity promotion. Although the simulation results in these papers and also in Section \ref{beamformingperformance} of the present paper demonstrate the usefulness of the SDR-type methods, no theoretical support is available due to the convexity of the Pareto boundary that is not guaranteed (which has also been mentioned in \cite{Mehanna2012} and \cite{Mehanna2013}). On the other hand, another downside of the SCA approach lies in the fact that it requires a feasible starting point, which could be a difficult task on its own \cite{Mehanna2015}.

\subsection{Contributions}

In this paper, a sparse array design algorithm based on the alternating direction method of multipliers (ADMM) is devised for adaptive beamforming. The proposed technique admits closed-form solutions at each ADMM iteration. Convergence analyses of the proposed algorithm are provided by showing the monotonicity and boundedness of the augmented Lagrangian function. Additionally, it is proved that the proposed algorithm converges to the set of Karush-Kuhn-Tucker (KKT) stationary points. Simulation results demonstrate excellent behavior of our scheme, as it outperforms several existing methods, and is comparable to the exhaustive search approach.

Note that parts of this paper were presented in its conference precursor, i.e., \cite{Huang2022June}. Compared to \cite{Huang2022June}, the advancements and supplementary contributions made in the present article are listed as follows.
\begin{itemize}
\item In \cite{Huang2022June}, we only provide an illustrative example for Proposition \ref{theorem_closest} without a proof. Differently, in the present article, besides an illustrative example we also provide a mathematical proof in a rigorous manner.
\item The convergence property of the proposed ADMM has not been analysed in \cite{Huang2022June}. While in the present article, we provide such convergence analyses in theory, and extensive numerical simulations are provided to validate the theoretical development, see Sections \ref{convergence_ADMM} and \ref{simulation} in the present article, respectively.
\item In the present article we consider the computational complexity of the algorithms, see Section \ref{computationalcost}, which is not mentioned in \cite{Huang2022June}.
\item Moreover, more existing sparse array design strategies (such as enumeration method, random sparse array, compact uniform linear array, nested array, co-prime array, semidefinite relaxation approach, and successive convex approximation method) are taken into account in the present article.
\end{itemize}

On the other hand, our algorithms and theoretical results are developed primarily on the basis of the ideas presented in \cite{Huang2016} and \cite{Hong2016}. Several differences are highlighted as follows.
\begin{itemize}
\item Different from \cite{Huang2016} which used ADMM to solve general quadratically constrained quadratic programming (QCQP) problems, we focus on a specific QCQP problem that arises in sparse array beamformer design. Our problem involves an $\ell_{1}$-norm regularization and thus our solution is sparse, which is not the case in \cite{Huang2016}. 

\item Another important difference between our work and \cite{Huang2016} lies in the fact that the latter only provides a weaker convergence result, i.e., \textit{if ADMM converges for their problem, then} it converges to a KKT stationary point, see Theorem 1 in \cite{Huang2016}. On the other hand, we show stronger convergence results for our algorithm. That is, we first prove the convergence of the proposed algorithm under a mild condition, and then we prove that it converges to a KKT stationary point.

\item Note that \cite{Hong2016} needed extra assumptions on the Lipschitz gradient continuity as well as the boundedness of their objective function. In our work, we require neither such assumptions nor any other assumptions.

\item Since \cite{Hong2016} considered general non-convex problems, no explicit expressions for their parameters were derived. On the contrary, as we consider a specific non-convex problem of sparse array beamformer design, the properties of our objective function have been investigated and thus several parameters are given in an explicit manner. See for example the augmented Lagrangian parameter $\rho$ and the strongly convex parameter $\gamma_{{\bf v}}$ in Lemma \ref{lem_monotonicity}.


\item The results in \cite{Hong2016} were based on the augmented Lagrangian function, while we exploit the scaled-form augmented Lagrangian function. This results in significant differences in the following three aspects: i) the proof of the monotonicity of the augmented Lagrangian function, ii) the proof of the property of the point sequence, and iii) the proof that the algorithm converges to the set of KKT stationary points; see the proofs of Lemma \ref{lem_monotonicity}, Theorem \ref{theorem_convergence}, and Theorem \ref{theorem_limit_stationary}, respectively.     
\end{itemize}

\subsection{Paper Structure}

The rest of the paper is organized as follows. The signal model is established in Section \ref{SignalModel}. The proposed approach is presented in Section \ref{proposed} and the convergence analyses are given in Section \ref{convergence_ADMM}. ADMM with re-weighted $\ell_{1}$-norm regularization is proposed in Section \ref{ADMM_enhanced_sparsity}. Section \ref{simulation} shows the simulation results, and Section \ref{conclusion} concludes the paper and provides possible future works.

\subsection{Notation}

The notation used throughout the paper is summarised in Table \ref{Table_notations}.

\begin{table*}[t]
\caption{Summary of Notation}
\label{Table_notations}
\centering{}%
\begin{tabular}{cc|cc}
\toprule 
Notation & Meaning  & Notation & Meaning \tabularnewline
\midrule
\midrule 
regular letter & scalar &  $\|\cdot\|_{0}$  & $\ell_{0}$-quasi-norm of a vector  \tabularnewline
bold-faced lower-case letter & vector  &  $\|\cdot\|_{1}$  & $\ell_{1}$-norm of a vector    \tabularnewline
bold-faced upper-case letter  & matrix  & $\|\cdot\|_{2}$  & $\ell_{2}$-norm of a vector    \tabularnewline
${\bf I}$ & identity matrix  &  $| \cdot |$  & absolute value  \tabularnewline
${\bf 1}_{M}$ & $M \! \times \! M$ all-one matrix  & $(\cdot)_{{\bold +}}$  & $(x)_{{\bold +}} \triangleq \max\{x, 0\}$  \tabularnewline
${\bf 1}$ & all-one vector   &  $\max\{a, b\}$  & maximum value of $a$ and $b$  \tabularnewline
${\bf 0}$ & all-zero vector   &  $\langle {\bf x} , {\bf y}\rangle$  & inner product, $\langle {\bf x} , {\bf y}\rangle = {\bf x}^{\mathrm{H}}{\bf y}$  \tabularnewline
$\cdot^{\mathrm{T}}$  & transpose  &  $\text{Tr}\{\cdot\}$  & matrix trace  \tabularnewline
$\cdot^{\mathrm{H}}$  & Hermitian transpose  & $\text{E}\{\cdot\}$  &  mathematical expectation  \tabularnewline
$\cdot^{-1}$  & matrix inverse   &  ${\bf X} \succeq 0$  &  ${\bf X}$ is positive semidefinite   \tabularnewline
$\lambda_{\mathrm{max}}(\cdot)$ & largest eigenvalue  & ${\bf X} \succeq {\bf Y}$  &  ${\bf X} - {\bf Y}$ is positive semidefinite   \tabularnewline
$\lambda_{\mathrm{min}}(\cdot)$ &  smallest eigenvalue  &  $\geq$  &  element-wise larger than or equal to  \tabularnewline
$\mathbb{C}$  & set of complex numbers  &  $\odot$  & element-wise multiplication     \tabularnewline
$\Re\{\cdot\}$  & real part   &   $\oslash$  & element-wise division    \tabularnewline
$\jmath$ & imaginary unit, $\jmath = \sqrt{-1}$  &  $\text{sign}(\cdot)$  &    $\text{sign}(x) = \left\{ \! \begin{array}{ll}
~0, &  \text{if}~x = 0, \\
\frac{x}{|x|}, &  \text{otherwise}
  \end{array}  \right.$  \tabularnewline 
\bottomrule
\end{tabular}
\end{table*}

\section{Signal Model}
\label{SignalModel}
Consider a compact uniform linear array (ULA) consisting of $M$ antenna sensors, where the terminology \textit{compact} means that the element-spacing of two adjacent antennas is equal to half-wavelength of the incident signals. We refer the ULA with element-spacing larger than half-wavelength to as \textit{sparse ULA}. Denote ${\bf x}(t) \in \mathbb{C}^{M}$ as the observation vector of the compact ULA, which can be modeled as:
\begin{align}
{\bf x}(t) = {\bf a}(\theta_{0})s_{0}(t) + \sum_{k = 1}^{K}{\bf a}(\theta_{k})s_{k}(t) + {\bf n}(t),
\end{align}
where $t = 1, 2, \cdots, T$ denotes the time index, with $T$ being the total number of available snapshots, $\theta_{0}$ and $s_{0}(t)$ are the direction-of-arrival (DOA) and waveform of the SOI, respectively, while $\theta_{k}$ and $s_{k}(t)$ denote those of the $k$-th interference signal. We consider one SOI and $K$ unknown interferers in the data model, and the SOI and interferers are assumed to be narrow-band, far-field, and uncorrelated signals. In addition, ${\bf n}(t) \in \mathbb{C}^{M}$ is a zero-mean Gaussian noise vector, and the array steering vector ${\bf a}(\theta) \in \mathbb{C}^{M}$ takes the form as 
\begin{align}
{\bf a}(\theta) = [1, e^{- \jmath \pi \sin(\theta)}, \cdots, e^{- \jmath \pi (M - 1) \sin(\theta)}]^{\mathrm{T}}.
\end{align}
For the simplicity of notation, we denote ${\bf a}(\theta_{k})$ as ${\bf a}_{k}$, for all $k = 0, 1, \cdots, K$, here and subsequently.

The beamformer output is calculated as
\begin{align}
y(t) = {\bf w}^{\mathrm{H}}{\bf x}(t),
\end{align}
in which ${\bf w} \in \mathbb{C}^{M}$ is the beamformer weight vector to be designed. The beamformer output SINR is defined as \cite{Vorobyov2014}
\begin{align}
\label{sinr}
\text{SINR} = \frac{\sigma_{s}^{2}|{\bf w}^{\mathrm{H}}{\bf a}_{0}|^{2}}{{\bf w}^{\mathrm{H}}{\bf R}_{{i+n}}{\bf w}},
\end{align}
where $\sigma_{s}^{2} = \text{E}\{|s_{0}(t)|^{2}\}$ is the power of the SOI, and ${\bf R}_{{i+n}}$ is the interference-plus-noise covariance matrix, which can be written as
\begin{align}
\label{R_in}
{\bf R}_{{i+n}} = \sum_{k = 1}^{K}\sigma_{k}^{2}{\bf a}_{k}{\bf a}_{k}^{\mathrm{H}} + \sigma_{n}^{2}{\bf I},
\end{align}
assuming that the interference signals are uncorrelated with the noise. In (\ref{R_in}), $\sigma_{k}^{2} = \text{E}\{|s_{k}(t)|^{2}\}$ is the power of the $k$-th interference signal, and $\sigma_{n}^{2}$ is the noise power.

One of the most prevailing strategies for beamformer design is to maximize the SINR, which leads to the minimum variance distortionless response (MVDR) beamformer design \cite{Capon1969}:
\begin{align}
\label{MVDR}
\min_{{\bf w}} ~ {\bf w}^{\mathrm{H}}{\bf R}_{i+n}{\bf w} \quad \text{subject to} ~ |{\bf w}^{\mathrm{H}}{\bf a}_{0}|^{2} = 1.
\end{align}
The above problem can be reformulated equivalently by replacing the realistically unattainable ${\bf R}_{i+n}$ with the received data covariance matrix ${\bf R}_{x} = \sigma_{s}^{2}{\bf a}_{0}{\bf a}_{0}^{\mathrm{H}} + {\bf R}_{i+n}$ \cite{Shahbazpanahi2003}, as:
\begin{align}
\label{SMI}
\min_{{\bf w}} ~ {\bf w}^{\mathrm{H}}{\bf R}_{x}{\bf w} \quad \text{subject to} ~ |{\bf w}^{\mathrm{H}}{\bf a}_{0}|^{2} \geq 1.
\end{align}
It is worth noting that the equality constraint is formulated as an inequality one, since the output power of the SOI is included as part of the objective function in (\ref{SMI}) \cite{Hamza2019, Hamza2021}.

The above problems have a closed-form solution as ${\bf w}_{\textrm{opt}} = \frac{{\bf R}_{i+n}^{-1}{\bf a}_{0}}{{\bf a}_{0}^{\textrm{H}}{\bf R}_{i+n}^{-1}{\bf a}_{0}}$. Substituting ${\bf w}_{\mathrm{opt}}$ into (\ref{sinr}) yields the corresponding optimum output SINR as      
\begin{align}
\textrm{SINR}_{\textrm{opt}} = \sigma_{s}^{2}{\bf a}_{0}^{\textrm{H}}{\bf R}_{i+n}^{-1}{\bf a}_{0}.
\end{align}

\section{Sparse Array Beamformer Design}
\label{proposed}
\subsection{Sparse Beamforming Problem}
Consider the situation where only $L \leq M$ RF chains are available \cite{Mehanna2012}, and thus only $L$ antennas can be simultaneously utilized for beamformer design. The problem can be formulated as \cite{Mehanna2012, Mehanna2013, Hamza2019, Hamza2021}:
\begin{align}
\label{L0_norm}
\min_{{\bf w}} ~ {\bf w}^{\mathrm{H}}{\bf R}_{x}{\bf w}  ~~ \text{subject to} ~ |{\bf w}^{\mathrm{H}}{\bf a}_{0}|^{2} \geq 1 \text{~and~} \|{\bf w}\|_{0} = L.
\end{align}
This is a combinatorial problem, and there are $M \choose L$ possible options. It could be an extremely huge number when $M$ is large and $L$ is moderate. For instance, if $M = 100$ and $L = 20$, there are totally $100 \choose 20$ $> 5 \times 10^{20}$ subproblems \cite{Diamantaras2021}. Even if a modern machine (as fast as $10^{-10}$ seconds per subproblem) is adopted, it still needs more than 1.5 thousand years in total. This is evidently unacceptable, and thus more computationally efficient approaches are required.

One widespread method is to replace the non-convex constraint $\|{\bf w}\|_{0} = L$ with its convex surrogates, such as the $\ell_{1}$-norm. By doing so and writing the $\ell_{1}$-norm in the objective function as a penalty, we relax Problem (\ref{L0_norm}) as \cite{Mehanna2012, Zheng2020, Hamza2019}
\begin{align}
\label{L1_norm}
\min_{{\bf w}} ~ {\bf w}^{\mathrm{H}}{\bf R}_{x}{\bf w} + \lambda\|{\bf w}\|_{1}  \quad \text{subject to} ~ |{\bf w}^{\mathrm{H}}{\bf a}_{0}|^{2} \geq 1,
\end{align}
where $\lambda > 0$ is a tuning parameter controlling the sparsity of the solution (i.e., the number of selected sensors). The above problem is QCQP \cite{Park2017} with $\ell_{1}$-regularization, and it is still non-convex because of its constraint. State-of-the-art solvers include SDR, SCA, ADMM, and their variants, see \cite{Park2017, Mehanna2015, Huang2023, Boyd2004, Huang2016, Boyd2011, Luo2010} for general QCQP problems and \cite{Mehanna2012, Mehanna2013, Zheng2020, Hamza2019, Hamza2021, Zhai2017, Dan2022, Cheng2021, Cheng2022, Liu2018, Wei2021, Chen2017} for specific QCQP problems with applications in MIMO radar, wireless communications, and so on. In the following subsection, we develop a method based on ADMM for solving Problem (\ref{L1_norm}), which shall be shown to have closed-form solutions at each iteration.

\subsection{ADMM for Problem (\ref{L1_norm})}
\label{admm_prob_L1norm}
To solve Problem (\ref{L1_norm}) using ADMM, we first introduce an auxiliary variable ${\bf v} \in \mathbb{C}^{M}$ and reformulate (\ref{L1_norm}) into   
\begin{subequations}
\label{ADMM_auxility}
\begin{align}
\label{ADMM_obj}
\min_{{\bf w}, {\bf v}} ~~~~ & {\bf v}^{\mathrm{H}}{\bf R}_{x}{\bf v} + \lambda\|{\bf w}\|_{1}  \\
\label{ADMM_st}
\text{subject to} ~ & \left\{ \begin{array}{l}
 |{\bf w}^{\mathrm{H}}{\bf a}_{0}|^{2} \geq 1 \\
 {\bf w} = {\bf v}.
 \end{array}
 \right.
\end{align}
\end{subequations}
Then we can write down the scaled-form ADMM iterations for Problem (\ref{L1_norm}) as \cite{Boyd2011} 
\begin{subequations}
\label{ADMM_iterate}
\begin{align}
\label{ADMM_w}
{\bf w} ~ & \gets ~ \left\{ \begin{array}{l}
\displaystyle\arg\min_{{\bf w}} ~ \lambda\|{\bf w}\|_{1} + \frac{\rho}{2}\|{\bf w} - {\bf v} + {\bf u}\|_{2}^{2} \\ 
\text{subject to} ~ |{\bf w}^{\mathrm{H}}{\bf a}_{0}|^{2} \geq 1
\end{array}
\right. \\
\label{ADMM_v}
{\bf v}  ~ & \gets ~ \arg\min_{{\bf v}} ~ {\bf v}^{\mathrm{H}}{\bf R}_{x}{\bf v} + \frac{\rho}{2}\|{\bf w} - {\bf v} + {\bf u}\|_{2}^{2} \\
\label{ADMM_u}
{\bf u} ~ & \gets ~ {\bf u} + {\bf w} - {\bf v}
\end{align}
\end{subequations}
where the original variable ${\bf w}$ and the auxiliary variable ${\bf v}$ are separately treated in (\ref{ADMM_w}) and (\ref{ADMM_v}), respectively, ${\bf u}$ is the scaled dual variable corresponding to the equality constraint in (\ref{ADMM_st}), i.e., ${\bf w} = {\bf v}$, and $\rho > 0$ is the augmented Lagrangian parameter.

In what follows, we show that (\ref{ADMM_iterate}) has closed-form solutions at each ADMM iteration, by deducing ${\bf w}$ and ${\bf v}$ from (\ref{ADMM_w}) and (\ref{ADMM_v}), respectively. First of all, from (\ref{ADMM_v}), it is simple to arrive at the closed-form solution of ${\bf w}$ in a least-squares form, as
\begin{align}
\label{solution_v}
{\bf v} = \rho(2{\bf R}_{x} + \rho{\bf I})^{-1}({\bf w} + {\bf u}).
\end{align}

Now we turn to (\ref{ADMM_w}). We solve (\ref{ADMM_w}) in two steps: i) We consider the unconstrained minimization problem by directly removing its constraint; ii) We check whether the solution obtained from Step i) satisfies the constraint, and update the final solution accordingly. Details are provided as follows.

\textbf{Step i):} We consider the following unconstrained minimization problem 
\begin{align}
\label{v_unconstrained}
\min_{{\bf w}} ~ \lambda\|{\bf w}\|_{1} + \frac{\rho}{2}\|{\bf w} - {\bf v} + {\bf u}\|_{2}^{2}.
\end{align}
By calculating the subgradient of the objective function in (\ref{v_unconstrained}) with respect to (w.r.t.) ${\bf w}$, and setting the resultant expression equal to zero, we obtain its solution, denoted by ${\bf{\bar{w}}}$, as 
\begin{align}
\label{w_bar}
{\bf{\bar{w}}} = \text{sign}({\bf v} - {\bf u}) \odot \left(|{\bf v} - {\bf u}| - \frac{\lambda}{\rho}\right)_{{\bold +}}.
\end{align}
The detailed derivation of (\ref{w_bar}) from Problem (\ref{v_unconstrained}) is omitted here, and the interested readers are referred to the similar result in Lemma 1 in \cite{Zoubir2018}.

\textbf{Step ii):} We check whether or not ${\bf{\bar{w}}}$ obtained from (\ref{w_bar}) statisfies $|{\bf{\bar{w}}}^{\mathrm{H}}{\bf a}_{0}|^{2} \geq 1$. If it is, then the solution to (\ref{ADMM_w}), referred to as ${\bf{\widehat{w}}}$, is ${\bf{\widehat{w}}} = {\bf{\bar{w}}}$. If it is not, then ${\bf{\widehat{w}}}$ can be found via the following theorem.

\begin{proposition}
\label{theorem_closest}
Denote ${\bf{\widehat{w}}}$ and ${\bf{\bar{w}}}$ as the solutions to Problems (\ref{ADMM_w}) and (\ref{v_unconstrained}), respectively. If ${\bf{\bar{w}}} $ does not satisfy $|{\bf{\bar{w}}}^{\mathrm{H}}{\bf a}_{0}|^{2} \! \geq \! 1$, then as long as $\rho \gg \lambda$, ${\bf{\widehat{w}}}$ equals the one in $\{{\bf w} \! : \! |{\bf w}^{\mathrm{H}}{\bf a}_{0}|^{2} \! \geq \! 1 \}$, such that it is closest (in an $\ell_{2}$-norm sense) to ${\bf{\bar{w}}}$. 
\end{proposition}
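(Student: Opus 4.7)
The plan is to reduce the constrained proximal problem (\ref{ADMM_w}) to a Euclidean projection by expanding its objective around the unconstrained minimizer $\bar{\bf w}$. Let ${\bf z} := {\bf v} - {\bf u}$ and $f({\bf w}) := \lambda\|{\bf w}\|_{1} + \frac{\rho}{2}\|{\bf w} - {\bf z}\|_{2}^{2}$. First I would invoke the first-order optimality of $\bar{\bf w}$ for (\ref{v_unconstrained})---the same relation used to derive the soft-thresholding formula (\ref{w_bar})---which states that $\rho({\bf z} - \bar{\bf w}) = \lambda\,\xi$ for some subgradient $\xi \in \partial\|\bar{\bf w}\|_{1}$.

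Next, using the standard identity $\|{\bf w} - {\bf z}\|_{2}^{2} - \|\bar{\bf w} - {\bf z}\|_{2}^{2} = \|{\bf w} - \bar{\bf w}\|_{2}^{2} + 2\Re\langle {\bf w} - \bar{\bf w},\, \bar{\bf w} - {\bf z}\rangle$ and substituting the optimality relation, I would obtain the decomposition
\begin{equation*}
f({\bf w}) = f(\bar{\bf w}) + \lambda\, B({\bf w},\bar{\bf w}) + \frac{\rho}{2}\|{\bf w} - \bar{\bf w}\|_{2}^{2},
\end{equation*}
where $B({\bf w},\bar{\bf w}) := \|{\bf w}\|_{1} - \|\bar{\bf w}\|_{1} - \Re\langle {\bf w} - \bar{\bf w},\,\xi\rangle \geq 0$ is the Bregman-type residual of the $\ell_{1}$-norm at $\bar{\bf w}$; its non-negativity is immediate from the convexity of $\|\cdot\|_{1}$.

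Because $f(\bar{\bf w})$ is a constant that does not affect the argmin, the constrained problem (\ref{ADMM_w}), after scaling its objective by $1/\rho$, becomes
\begin{equation*}
\min_{|{\bf w}^{\mathrm{H}}{\bf a}_{0}|^{2}\geq 1} \; \frac{\lambda}{\rho}\,B({\bf w},\bar{\bf w}) + \frac{1}{2}\|{\bf w} - \bar{\bf w}\|_{2}^{2}.
\end{equation*}
Since $B({\bf w},\bar{\bf w})$ grows at most linearly in $\|{\bf w} - \bar{\bf w}\|_{2}$ (combining $\|\xi\|_{\infty} \leq 1$ with $\|\cdot\|_{1} \leq \sqrt{M}\,\|\cdot\|_{2}$), letting $\lambda/\rho \to 0$ suppresses the first term, so the minimizer tends to the minimizer of the pure quadratic term, which is by definition the point of $\{{\bf w}:|{\bf w}^{\mathrm{H}}{\bf a}_{0}|^{2}\geq 1\}$ that is $\ell_{2}$-closest to $\bar{\bf w}$. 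Under the assumed regime $\rho \gg \lambda$ this is exactly the claim of the proposition.

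The main obstacle is giving precise meaning to the informal condition "$\rho \gg \lambda$", since for any finite ratio the residual $B$ will perturb the minimizer slightly away from the pure projection. My plan is therefore to interpret Proposition \ref{theorem_closest} in the asymptotic sense that the decomposition above makes transparent; a fully quantitative version would bound the $\ell_{2}$-distance between $\widehat{\bf w}$ and the projection of $\bar{\bf w}$ onto the feasible set by $O(\lambda/\rho)$, via the strong convexity of the quadratic term combined with the linear growth of $B$, but such a rate is not needed for the qualitative statement.
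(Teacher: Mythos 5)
Your argument is correct, and it takes a genuinely different route from the paper's. The paper fixes $\rho = C\lambda$, lets $C \to \infty$, and simply asserts that in this limit the objective \emph{equals} $\frac{\rho}{2}\|{\bf w} - {\bf \bar w}\|_2^2$ (simultaneously discarding the $\lambda\|{\bf w}\|_1$ term and replacing ${\bf v}-{\bf u}$ by ${\bf \bar w}$), then closes with a short contradiction argument showing the constrained minimizer of that quadratic is the closest feasible point. You instead use the first-order optimality of ${\bf \bar w}$ to obtain the \emph{exact} identity $f({\bf w}) = f({\bf \bar w}) + \lambda B({\bf w},{\bf \bar w}) + \frac{\rho}{2}\|{\bf w}-{\bf \bar w}\|_2^2$ with $B \ge 0$ the Bregman residual of the $\ell_1$-norm; after dividing by $\rho$ the problem is visibly a Euclidean projection perturbed by an $O(\lambda/\rho)$ term with controlled (linear) growth. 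This is cleaner and more honest than the paper's version: the two approximations the paper makes are each of order $\lambda\|{\bf w}-{\bf \bar w}\|_2$, i.e.\ \emph{not} individually negligible relative to the retained quadratic near ${\bf \bar w}$, and your decomposition shows they combine into a single non-negative term whose relative weight is exactly $\lambda/\rho$ --- which is what actually justifies the claim. You also correctly identify the residual imprecision (the proposition is only asymptotic, and for any finite $\lambda/\rho$ the minimizer is perturbed away from the pure projection) and sketch the quantitative $O(\lambda/\rho)$ bound via strong convexity; the paper leaves this entirely implicit. One minor point neither proof addresses: the feasible set $\{|{\bf w}^{\mathrm{H}}{\bf a}_0|^2 \ge 1\}$ is non-convex, and the closest point to ${\bf \bar w}$ is unique only when ${\bf \bar w}^{\mathrm{H}}{\bf a}_0 \ne 0$ (as the closed-form (\ref{w_widehat}) implicitly assumes); stating that caveat would make either argument airtight.
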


\begin{proof}
Define $f({\bf{w}}) \triangleq \lambda\|{\bf w}\|_{1} + \frac{\rho}{2}\|{\bf w} - {\bf v} + {\bf u}\|_{2}^{2}$, and define ${\bf{\widetilde{w}}}$ such that   
\begin{align}
\label{w_widetilde}
\|{\bf{\widetilde{w}}} - {\bf{\bar{w}}}\|_{2} \leq \|{\bf w} - {\bf{\bar{w}}}\|_{2}
\end{align}
holds for any ${\bf w} \in \{{\bf w} : |{\bf w}^{\mathrm{H}}{\bf a}_{0}|^{2} \geq 1 \}$.   
The Lagrangian parameter is chosen as $\rho = C\lambda$, where $C$ is a constant. As shall be shown later, the augmented Lagrangian parameter $\rho$ is set to be large in order to make our algorithm converge. When $C \to \infty$ (i.e., $\rho \gg \lambda$), the objective function 
\begin{align}
\label{f_w_approx}
f({\bf w}) = \frac{\rho}{2}\|{\bf w} - {\bf v} + {\bf u}\|_{2}^{2} = \frac{\rho}{2}\|{\bf w} - {\bf{\bar{w}}}\|_{2}^{2},
\end{align}
where, in the second equality, we used ${\bf{\bar{w}}} = {\bf v} - {\bf u}$ as $C \to \infty$. 

Suppose that there exists a point ${\bf{w'}} \in \{{\bf w} : |{\bf w}^{\mathrm{H}}{\bf a}_{0}|^{2} \geq 1 \}$, such that $f({\bf{w'}}) < f({\bf{\widetilde{w}}})$. Thus, by using (\ref{f_w_approx}), we obtain that $\frac{\rho}{2}\|{\bf{w'}} - {\bf{\bar{w}}}\|_{2}^{2} < \frac{\rho}{2}\|{\bf{\widetilde{w}}} - {\bf{\bar{w}}}\|_{2}^{2}$, which contradicts (\ref{w_widetilde}). This implies that $f({\bf{\widetilde{w}}}) \leq f({\bf w})$ holds for all feasible ${\bf w}$, that is, ${\bf{\widetilde w}}$ is the solution to Problem (\ref{ADMM_w}).
\end{proof}

\begin{remark}
Besides the above mathematical proof, we give an illustrative example for Proposition \ref{theorem_closest}, by showing the near-symmetric structure of the objective function around its stationary point\footnote{We say a function $f({\bf x})$ has a \textit{near-symmetric} structure around a point ${\bf x}_{0}$ if and only if $f({\bf x}_{0} + {\bf x}) \approx f({\bf x}_{0} - {\bf x})$ holds for any ${\bf x}$.}. For simplicity, the variable is set to be real-valued and the dimension $M = 1$. The parameters are $\lambda = 1$, $\rho = 4$, ${\bf a}_{0} = 1/2$, and $-{\bf v} + {\bf u} = -1$. Problem (\ref{ADMM_w}) becomes
\begin{align}
\label{illustractive_example}
\min_{w} ~ f(w) \triangleq |w| + 2(w - 1)^{2} \quad \text{subject to} ~ |w| \geq 2,
\end{align}
with its stationary point $w_{0}$ falling outside its feasible region $|w| \geq 2$, as in Fig. \ref{f_w}. Thanks to the convexity and near-symmetric structure of the objective function $f(w)$, finding its minimum is equivalent to determining the point (inside the feasible region) closest to its stationary point $w_{0}$. In Fig. \ref{f_w}, it is easy to see that $w = 2$ is such a point, and thus it is the solution to Problem (\ref{illustractive_example}). 

\begin{figure}[t]
	\vspace*{-6mm}
	\centerline{\includegraphics[width=0.5\textwidth]{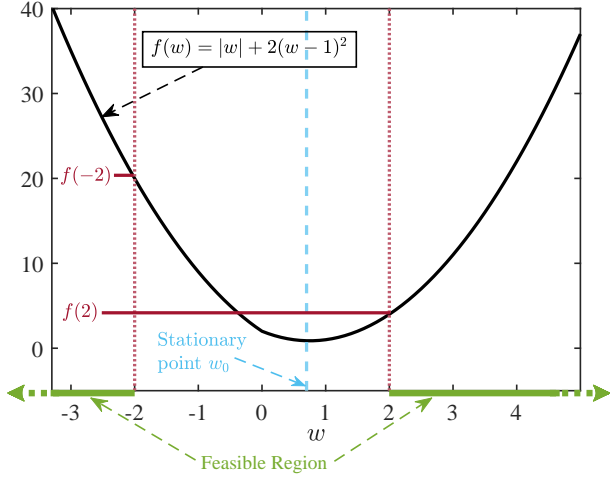}}
	\caption{Illustration of near-symmetric structure of $f(w)$.}
	\label{f_w}
\end{figure}

\end{remark}

Consequently, if ${\bf{\bar{w}}}$ obtained from (\ref{w_bar}) does not satisfy $|{\bf{\bar{w}}}^{\mathrm{H}}{\bf a}_{0}|^{2} \geq 1$, then according to Proposition \ref{theorem_closest}, ${\bf{\widehat{w}}}$ can be found by solving the following minimization problem
\begin{align}
\label{w_har_inequality}
{\bf{\widehat{w}}} ~ \gets ~ \arg\min_{{\bf w}} ~ \|{\bf w} - {\bf{\bar{w}}}\|_{2}^{2} \quad \text{subject to} ~ |{\bf w}^{\mathrm{H}}{\bf a}_{0}|^{2} \geq 1,
\end{align}
which is equivalent to 
\begin{align}
\label{w_har_equality}
{\bf{\widehat{w}}} ~ \gets ~ \arg\min_{{\bf w}} ~ \|{\bf w} - {\bf{\bar{w}}}\|_{2}^{2} \quad \text{subject to} ~ |{\bf w}^{\mathrm{H}}{\bf a}_{0}|^{2} = 1.
\end{align}
The equivalence between Problems (\ref{w_har_inequality}) and (\ref{w_har_equality}) is shown in Appendix \ref{proof_equiv}, and it indicates that the solutions to Problem (\ref{w_har_inequality}) always fall on the \textit{boundary} of its feasible region. Problem (\ref{w_har_equality}) has the following closed-form solution as \cite{Huang2016}
\begin{align}
{\bf{\widehat{w}}} = {\bf{\bar{w}}} + \frac{1 - |{\bf{\bar{w}}}^{\mathrm{H}}{\bf a}_{0}|}{\|{\bf a}_{0}\|_{2}^{2} |{\bf{\bar{w}}}^{\mathrm{H}}{\bf a}_{0}|} {\bf a}_{0}{\bf{\bar{w}}}^{\mathrm{H}}{\bf a}_{0}.
\end{align}

Eventually, by considering Steps i) and ii) simultaneously and making use of the plus function, the solution to (\ref{ADMM_w}) can be written in a single formula as
\begin{align}
\label{w_widehat}
{\bf{\widehat{w}}} = {\bf{\bar{w}}} + \frac{ \left( 1 - |{\bf{\bar{w}}}^{\mathrm{H}}{\bf a}_{0}| \right)_{{\bold +}} }{\|{\bf a}_{0}\|_{2}^{2} |{\bf{\bar{w}}}^{\mathrm{H}}{\bf a}_{0}|} {\bf a}_{0}{\bf{\bar{w}}}^{\mathrm{H}}{\bf a}_{0},
\end{align}
where ${\bf{\bar{w}}}$ is given in (\ref{w_bar}).

So far, we have derived closed-form solutions for ${\bf w}$ and ${\bf v}$ at each ADMM iteration. The complete ADMM for solving Problem (\ref{L1_norm}) is summarized in Algorithm \ref{admm_alg}, where $k_{\mathrm{max}}$ denotes a large scalar and $\eta$ a small one, used to terminate the iteration, and subscript $\cdot_{(k)}$ denotes the variable at the $k$-th iteration. The convergence property of the proposed ADMM algorithm will be discussed in Section \ref{convergence_ADMM}. 


\begin{remark}
Note that we can attain any level of sparsity (i.e., any number $L$ out of $M$ sensors in sparse array design), by carefully tuning the value of $\lambda$. This shall be verified in the simulation section, see Fig. \ref{sparsityofweight} in Section \ref{simulation}.
\end{remark}

\begin{remark}
\label{remark_L}
To ensure selection of $L$ sensors, an appropriate value of $\lambda$ is typically found by carrying out a binary search over a probable interval of $\lambda$, say $[\lambda_{L} , \lambda_{U}]$. To be precise, we begin by solving ${\bf{\widehat{w}}}$ using Algorithm \ref{admm_alg}, with $\lambda = (\lambda_{L} + \lambda_{U})/2$. Denote $\widehat{L}$ the number of entries of ${\bf{\widehat{w}}}$ larger than (in modulus) $\alpha$ times the maximum (in modulus) entry of ${\bf{\widehat{w}}}$. If ${\widehat{L}} > L$ (resp. ${\widehat{L}} < L$), then we update $\lambda_{L} = \lambda$ (resp. $\lambda_{U} = \lambda$), and solve another ${\bf{\widehat{w}}}$ with $\lambda = (\lambda_{L} + \lambda_{U})/2$. We repeat the above steps until ${\widehat{L}} = L$, and the sensors corresponding to the $L$ largest entries (in modulus) are selected. We set $\alpha = 0.1$ in our simulations. 
\end{remark}

\begin{remark}
Note that the solution of (\ref{L1_norm}) is not exactly equal to the one of (\ref{L0_norm}). Therefore, after the solution of desired sparsity of (\ref{L1_norm}) is obtained, one should solve a reduced-size minimization problem similar to (\ref{SMI}) as a last step, using only the selected sensors.
\end{remark}

\begin{algorithm}[t]
	\caption{ADMM for solving Problem (\ref{L1_norm})}
	\label{admm_alg}
	\textbf{Input~~~~\!:} ${\bf R}_{x} \in \mathbb{C}^{M \times M}$, ${\bf a}_{0} \in \mathbb{C}^{M}$, $\lambda$, $\rho$, $k_{\mathrm{max}}$, $\eta$ \\
	\textbf{Output~~\!:} ${\bf{\widehat w}} \in \mathbb{C}^{M}$ \\
	\textbf{Initialize:} ${\bf v}_{(0)} \gets {\bf v}_{\text{init}}$, ${\bf u}_{(0)} \gets {\bf u}_{\text{init}}$, $k \gets 0$
	\begin{algorithmic}[1]
		\While {not converged} \\
		 ~~ ${\bf{\bar{w}}}_{(k+1)} \gets  \text{sign}( \! {\bf v}_{(k)} - {\bf u}_{(k)} \! ) \! \odot \! \left( \! |{\bf v}_{(k)} \! - \! {\bf u}_{(k)}| \! - \! \frac{\lambda}{\rho} \! \right)_{\!\!{\bold +}} \!\! $ \\
		 ~~ ${\bf w}_{(k+1)} \gets {\bf{\bar{w}}}_{(k+1)} + \frac{\left( 1 - |{\bf{\bar{w}}}_{(k+1)}^{\mathrm{H}}{\bf a}_{0}| \right)_{{\bold +}}}{\|{\bf a}_{0}\|_{2}^{2}|{\bf{\bar{w}}}_{(k+1)}^{\mathrm{H}}{\bf a}_{0}|} {\bf a}_{0}{\bf{\bar{w}}}_{(k+1)}^{\mathrm{H}}{\bf a}_{0} $ \\
		 ~~ ${\bf v}_{(k+1)} \gets \rho(2{\bf R}_{x} + \rho{\bf I})^{-1}({\bf w}_{(k+1)} + {\bf u}_{(k)})$ \\
		 ~~ ${\bf u}_{(k+1)} \gets {\bf u}_{(k)} + {\bf w}_{(k+1)} - {\bf v}_{(k+1)}$ \\
		~~ converged $\gets$ $ k \! + \! 1 \geq k_{\mathrm{max}} $ or $ \| {\bf w}_{(k+1)} \! - \! {\bf v}_{(k+1)} \|_{2} \leq \eta $ \\
		 ~~ $k \gets k+1$
		\EndWhile \State \textbf{end while} \\
		${\bf{\widehat w}} \gets {\bf w}_{(k)}$
	\end{algorithmic}
\end{algorithm}

\section{Convergence Analysis}
\label{convergence_ADMM}
The convergence properties of Algorithm \ref{admm_alg} are presented in this section. We start with two lemmata, which show the monotonicity and boundedness of the augmented Lagrangian function of Problem (\ref{ADMM_auxility}). Two theorems are then given to show that the proposed algorithm converges and that it converges to a stationary point.

The augmented Lagrangian function regarding Problem (\ref{ADMM_auxility}) can be written as
$\mathcal{L}({\bf w}, {\bf v}, {\bf u}) \triangleq \lambda\|{\bf w}\|_{1} + {\bf v}^{\mathrm{H}}{\bf R}_{x}{\bf v} + \frac{\rho}{2}\left(\|{\bf w} - {\bf v} + {\bf u}\|_{2}^{2} - \|{\bf u}\|_{2}^{2} \right)$.
As stated in Section \ref{admm_prob_L1norm}, ${\bf w}$, ${\bf v}$, and ${\bf u}$ are the original, auxiliary, and dual variables, respectively, and $\rho > 0$ is the augmented Lagrangian parameter. In what follows, Lemma \ref{lem_monotonicity} shows that the function value of $\mathcal{L}({\bf w}, {\bf v}, {\bf u})$ is non-increasing, and Lemma \ref{lem_boundedness} shows that the function value $\mathcal{L}({\bf w}, {\bf v}, {\bf u})$ is bounded from below, on the condition that $\rho$ is larger than or equal to a certain value.

%
\begin{lemma}
\label{lem_monotonicity}
As long as the parameter $\rho \geq 2\sqrt{2}\lambda_{\mathrm{max}}({\bf R}_{x})$, the point sequence produces a monotonically non-increasing objective function value sequence $\{\mathcal{L}({\bf w}_{(k)}, {\bf v}_{(k)}, {\bf u}_{(k)})\}$. That is, $\mathcal{L}({\bf w}_{(k + 1)}, {\bf v}_{(k + 1)}, {\bf u}_{(k + 1)}) \leq \mathcal{L}({\bf w}_{(k)}, {\bf v}_{(k)}, {\bf u}_{(k)})$ holds for all $k = 0, 1, 2, \cdots$.
\end{lemma}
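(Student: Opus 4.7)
My plan is to decompose the one-step change $\Delta \triangleq \mathcal{L}({\bf w}_{(k+1)}, {\bf v}_{(k+1)}, {\bf u}_{(k+1)}) - \mathcal{L}({\bf w}_{(k)}, {\bf v}_{(k)}, {\bf u}_{(k)})$ telescopically into three block-wise contributions $\Delta = \Delta_{\bf w} + \Delta_{\bf v} + \Delta_{\bf u}$ corresponding to the updates (\ref{ADMM_w})--(\ref{ADMM_u}), bound each contribution, and then pick $\rho$ large enough that the sum is non-positive.

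The first two pieces are the descent pieces. For the ${\bf w}$-block, note that by construction of Algorithm \ref{admm_alg} every iterate ${\bf w}_{(k)}$ lies in the feasible set $\{{\bf w} : |{\bf w}^{\mathrm{H}}{\bf a}_{0}|^{2} \geq 1\}$, so it is admissible for the next ${\bf w}$-subproblem; since ${\bf w}_{(k+1)}$ is a minimizer of that subproblem by Proposition \ref{theorem_closest}, the terms not involving ${\bf w}$ cancel and $\Delta_{\bf w} \leq 0$. For the ${\bf v}$-block, the subproblem objective ${\bf v}^{\mathrm{H}}{\bf R}_{x}{\bf v} + \frac{\rho}{2}\|{\bf w}_{(k+1)} - {\bf v} + {\bf u}_{(k)}\|_{2}^{2}$ has Hessian $2{\bf R}_{x} + \rho{\bf I}$, hence is $\gamma_{\bf v}$-strongly convex with $\gamma_{\bf v} = \rho + 2\lambda_{\mathrm{min}}({\bf R}_{x}) \geq \rho$, and the standard minimizer inequality yields $\Delta_{\bf v} \leq -\frac{\gamma_{\bf v}}{2}\|{\bf v}_{(k+1)} - {\bf v}_{(k)}\|_{2}^{2}$. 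The ${\bf u}$-block, however, works against us: substituting ${\bf u}_{(k+1)} = {\bf u}_{(k)} + {\bf w}_{(k+1)} - {\bf v}_{(k+1)}$ into $\frac{\rho}{2}(\|{\bf w}-{\bf v}+{\bf u}\|_{2}^{2} - \|{\bf u}\|_{2}^{2})$ and expanding yields, after cancellation, $\Delta_{\bf u} = \rho\|{\bf w}_{(k+1)} - {\bf v}_{(k+1)}\|_{2}^{2}$.

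The main obstacle is to dominate this positive $\Delta_{\bf u}$ by the ${\bf v}$-descent. The idea is to route the primal residual through the dual. First-order optimality of (\ref{ADMM_v}) gives $(2{\bf R}_{x} + \rho{\bf I}){\bf v}_{(k+1)} = \rho({\bf w}_{(k+1)} + {\bf u}_{(k)})$, and combining with (\ref{ADMM_u}) collapses this to the clean identity $\rho\,{\bf u}_{(k+1)} = 2{\bf R}_{x}{\bf v}_{(k+1)}$, which holds at every iterate. Subtracting consecutive instances and using ${\bf u}_{(k+1)} - {\bf u}_{(k)} = {\bf w}_{(k+1)} - {\bf v}_{(k+1)}$ yields
\[
{\bf w}_{(k+1)} - {\bf v}_{(k+1)} = \frac{2}{\rho}{\bf R}_{x}\bigl({\bf v}_{(k+1)} - {\bf v}_{(k)}\bigr),
\]
so that $\|{\bf w}_{(k+1)} - {\bf v}_{(k+1)}\|_{2}^{2} \leq \frac{4\lambda_{\mathrm{max}}^{2}({\bf R}_{x})}{\rho^{2}}\|{\bf v}_{(k+1)} - {\bf v}_{(k)}\|_{2}^{2}$. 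I expect this relay through the dual variable to be the most delicate step, since it relies both on the ${\bf v}$-subproblem being solved exactly and on a cancellation peculiar to the scaled form of the Lagrangian.

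Assembling the three bounds gives
\[
\Delta \leq \left(\frac{4\lambda_{\mathrm{max}}^{2}({\bf R}_{x})}{\rho} - \frac{\gamma_{\bf v}}{2}\right)\|{\bf v}_{(k+1)} - {\bf v}_{(k)}\|_{2}^{2},
\]
so monotonicity reduces to the sufficient condition $\rho\,\gamma_{\bf v} \geq 8\lambda_{\mathrm{max}}^{2}({\bf R}_{x})$. Using the loose bound $\gamma_{\bf v} \geq \rho$, this becomes $\rho^{2} \geq 8\lambda_{\mathrm{max}}^{2}({\bf R}_{x})$, i.e., precisely the threshold $\rho \geq 2\sqrt{2}\lambda_{\mathrm{max}}({\bf R}_{x})$ stated in the lemma.
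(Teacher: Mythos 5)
Your proof is correct and follows essentially the same route as the paper's: the same telescoping decomposition over the three block updates, the same strong-convexity bound $\gamma_{\bf v} = \rho + 2\lambda_{\mathrm{min}}({\bf R}_x)$ for the ${\bf v}$-step, and the same key identity $\rho{\bf u}_{(k+1)} = 2{\bf R}_x{\bf v}_{(k+1)}$ to dominate the positive dual-update term $\rho\|{\bf w}_{(k+1)}-{\bf v}_{(k+1)}\|_2^2$ by $\frac{4\lambda_{\mathrm{max}}^2({\bf R}_x)}{\rho}\|{\bf v}_{(k+1)}-{\bf v}_{(k)}\|_2^2$. The only cosmetic difference is that you discard $\lambda_{\mathrm{min}}$ via $\gamma_{\bf v}\geq\rho$ to reach $\rho\geq 2\sqrt{2}\lambda_{\mathrm{max}}({\bf R}_x)$ directly, whereas the paper solves the quadratic in $\rho$ exactly and then observes that $2\sqrt{2}\lambda_{\mathrm{max}}({\bf R}_x)$ is a sufficient (slightly loose) threshold.
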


\begin{proof}
See Appendix \ref{proof_lem_monotonicity}.
\end{proof}

%
\begin{lemma}
\label{lem_boundedness}
The function value of $\mathcal{L}({\bf w}, {\bf v}, {\bf u})$ is bounded from below by $0$, as long as\footnote{It is worth noting that the lower bound here is not tight, see Appendix \ref{proof_lem_boundedness}.} 
\begin{align}
\label{rho_lowbound}
\rho \geq \frac{2\lambda_{\mathrm{max}}^{2}({\bf R}_{x})}{\lambda_{\mathrm{min}}({\bf R}_{x})}.
\end{align}
\end{lemma}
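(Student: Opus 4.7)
The plan is to control the only negative term in $\mathcal{L}$, namely $-\tfrac{\rho}{2}\|{\bf u}\|_{2}^{2}$, by expressing ${\bf u}$ in terms of ${\bf v}$ using the ADMM updates, and then absorbing the resulting negative contribution into the ${\bf v}^{\mathrm{H}}{\bf R}_{x}{\bf v}$ term, which is positive semidefinite.

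First I would derive the key identity linking the dual variable to the auxiliary variable. The optimality condition of the quadratic subproblem (\ref{ADMM_v}) for ${\bf v}_{(k+1)}$ reads
\begin{equation*}
2{\bf R}_{x}{\bf v}_{(k+1)} \;=\; \rho\bigl({\bf w}_{(k+1)} - {\bf v}_{(k+1)} + {\bf u}_{(k)}\bigr),
\end{equation*}
and combining this with the dual update ${\bf u}_{(k+1)} = {\bf u}_{(k)} + {\bf w}_{(k+1)} - {\bf v}_{(k+1)}$ yields
\begin{equation*}
{\bf u}_{(k+1)} \;=\; \tfrac{2}{\rho}\,{\bf R}_{x}{\bf v}_{(k+1)}.
\end{equation*}
This is the crucial identity that trades an uncontrolled dual variable for an expression involving the well-behaved ${\bf v}$.

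Next I would plug this into $\mathcal{L}$ evaluated at the iterate $({\bf w}_{(k+1)}, {\bf v}_{(k+1)}, {\bf u}_{(k+1)})$. Bounding $\|{\bf u}_{(k+1)}\|_{2}^{2} \leq \tfrac{4}{\rho^{2}}\lambda_{\max}^{2}({\bf R}_{x})\,\|{\bf v}_{(k+1)}\|_{2}^{2}$ and using ${\bf v}^{\mathrm{H}}{\bf R}_{x}{\bf v} \geq \lambda_{\min}({\bf R}_{x})\|{\bf v}\|_{2}^{2}$ gives
\begin{equation*}
\mathcal{L} \;\geq\; \lambda\|{\bf w}\|_{1} + \left(\lambda_{\min}({\bf R}_{x}) - \tfrac{2\lambda_{\max}^{2}({\bf R}_{x})}{\rho}\right)\|{\bf v}\|_{2}^{2} + \tfrac{\rho}{2}\|{\bf w} - {\bf v} + {\bf u}\|_{2}^{2}.
\end{equation*}
The assumed bound $\rho \geq 2\lambda_{\max}^{2}({\bf R}_{x})/\lambda_{\min}({\bf R}_{x})$ renders the middle coefficient non-negative, so all three summands are non-negative and therefore $\mathcal{L} \geq 0$.

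I expect the only non-routine step to be spotting the identity ${\bf u}_{(k+1)} = \tfrac{2}{\rho}{\bf R}_{x}{\bf v}_{(k+1)}$; once that is in hand the rest is a direct eigenvalue estimate. A minor subtlety is that the bound is established at the iterates produced by Algorithm \ref{admm_alg} rather than at arbitrary $({\bf w}, {\bf v}, {\bf u})$, which is consistent with the statement since the convergence theory only needs boundedness along the trajectory, and also justifies the footnote remarking that the lower bound is not tight.
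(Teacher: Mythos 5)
Your proposal is correct and follows essentially the same route as the paper's proof: both hinge on the identity ${\bf u} = \tfrac{2}{\rho}{\bf R}_{x}{\bf v}$ obtained from combining the ${\bf v}$-update optimality condition with the dual update, then bound $\tfrac{\rho}{2}\|{\bf u}\|_{2}^{2}$ via $\|{\bf R}_{x}{\bf v}\|_{2}^{2} \leq \lambda_{\max}^{2}({\bf R}_{x})\|{\bf v}\|_{2}^{2}$ and absorb it into the quadratic term, with the stated condition on $\rho$ making the remaining coefficient non-negative. The paper phrases the last step as positive semidefiniteness of ${\bf R}_{x} - \tfrac{2}{\rho}\lambda_{\max}^{2}({\bf R}_{x}){\bf I}$ rather than via $\lambda_{\min}\|{\bf v}\|_{2}^{2}$, but these are equivalent.
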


\begin{proof}
See Appendix \ref{proof_lem_boundedness}.
\end{proof}

With Lemmata \ref{lem_monotonicity} and \ref{lem_boundedness}, we have the following theorem.

\begin{theorem}
\label{theorem_convergence}
As long as the augmented Lagrangian parameter
\begin{align}
\label{rho_condition}
\rho \geq \mathrm{max} \left\{ 2\sqrt{2}\lambda_{\mathrm{max}}({\bf R}_{x}) , \frac{2\lambda_{\mathrm{max}}^{2}({\bf R}_{x})}{\lambda_{\mathrm{min}}({\bf R}_{x})} \right\},
\end{align}
the objective function value sequence $\{\mathcal{L}({\bf w}_{(k)}, {\bf v}_{(k)}, {\bf u}_{(k)})\}$ generated by Algorithm \ref{admm_alg} converges. Furthermore, as $k \rightarrow \infty$, we have ${\bf w}_{(k+1)} = {\bf w}_{(k)}$, ${\bf v}_{(k+1)} = {\bf v}_{(k)}$, ${\bf u}_{(k+1)} = {\bf u}_{(k)}$, and ${\bf w}_{(k)} = {\bf v}_{(k)}$. 
\end{theorem}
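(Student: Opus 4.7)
The plan is to combine Lemmata \ref{lem_monotonicity} and \ref{lem_boundedness} to conclude convergence of the function-value sequence, then bootstrap from a per-iteration descent bound to extract convergence of the iterates themselves, and finally use the dual update to force $\mathbf{w}_{(k)} - \mathbf{v}_{(k)} \to \mathbf{0}$.

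First, I would note that the condition (\ref{rho_condition}) simultaneously enforces the hypotheses of both preceding lemmata. Hence $\{\mathcal{L}(\mathbf{w}_{(k)}, \mathbf{v}_{(k)}, \mathbf{u}_{(k)})\}$ is monotonically non-increasing (Lemma \ref{lem_monotonicity}) and bounded below by $0$ (Lemma \ref{lem_boundedness}). By the monotone convergence theorem for real sequences, the sequence $\{\mathcal{L}(\mathbf{w}_{(k)}, \mathbf{v}_{(k)}, \mathbf{u}_{(k)})\}$ converges to a finite limit. This immediately gives the first assertion of the theorem, and as a corollary the telescoping increments $\mathcal{L}(\mathbf{w}_{(k)}, \mathbf{v}_{(k)}, \mathbf{u}_{(k)}) - \mathcal{L}(\mathbf{w}_{(k+1)}, \mathbf{v}_{(k+1)}, \mathbf{u}_{(k+1)}) \to 0$ as $k \to \infty$.

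Second, I would revisit the proof of Lemma \ref{lem_monotonicity} and extract a sharper, quantitative descent estimate of the form
\begin{align*}
\mathcal{L}(\mathbf{w}_{(k)}, \mathbf{v}_{(k)}, \mathbf{u}_{(k)}) - \mathcal{L}(\mathbf{w}_{(k+1)}, \mathbf{v}_{(k+1)}, \mathbf{u}_{(k+1)}) \geq \alpha \|\mathbf{w}_{(k+1)} - \mathbf{w}_{(k)}\|_{2}^{2} + \beta \|\mathbf{v}_{(k+1)} - \mathbf{v}_{(k)}\|_{2}^{2} + \gamma \|\mathbf{u}_{(k+1)} - \mathbf{u}_{(k)}\|_{2}^{2},
\end{align*}
with $\alpha, \beta, \gamma > 0$ (the constants being exactly those produced by the strong-convexity argument with parameter $\gamma_{\mathbf{v}}$ and by the identity $\mathbf{u}_{(k+1)} - \mathbf{u}_{(k)} = \mathbf{w}_{(k+1)} - \mathbf{v}_{(k+1)}$ used when relating the dual error to the primal residual). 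Since the left-hand side tends to zero along the whole sequence, each non-negative summand on the right must also tend to zero, giving $\mathbf{w}_{(k+1)} - \mathbf{w}_{(k)} \to \mathbf{0}$, $\mathbf{v}_{(k+1)} - \mathbf{v}_{(k)} \to \mathbf{0}$, and $\mathbf{u}_{(k+1)} - \mathbf{u}_{(k)} \to \mathbf{0}$.

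Finally, the dual update (\ref{ADMM_u}) reads $\mathbf{u}_{(k+1)} - \mathbf{u}_{(k)} = \mathbf{w}_{(k+1)} - \mathbf{v}_{(k+1)}$, so the vanishing of $\mathbf{u}_{(k+1)} - \mathbf{u}_{(k)}$ forces $\mathbf{w}_{(k)} - \mathbf{v}_{(k)} \to \mathbf{0}$, yielding the last claim of the theorem. The main obstacle I anticipate is the second step: verifying that the descent inequality proved for Lemma \ref{lem_monotonicity} can indeed be strengthened to a strict quadratic lower bound in all three successive differences rather than merely a sign-definite decrease. This hinges on the strong convexity of the $\mathbf{v}$-subproblem (quantified via $\gamma_{\mathbf{v}}$), together with carefully combining the increments of $\mathcal{L}$ across the sequential $\mathbf{w}$-, $\mathbf{v}$-, and $\mathbf{u}$-updates so that the cross terms involving the dual increment are absorbed using the bound on $\|\mathbf{u}_{(k+1)} - \mathbf{u}_{(k)}\|_{2}$ in terms of $\|\mathbf{v}_{(k+1)} - \mathbf{v}_{(k)}\|_{2}$, which is exactly where the threshold $\rho \geq 2\sqrt{2}\lambda_{\max}(\mathbf{R}_{x})$ becomes essential.
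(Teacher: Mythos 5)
Your first step (monotone $+$ bounded below $\Rightarrow$ convergence of $\{\mathcal{L}_{(k)}\}$ and vanishing increments) and your last step (reading ${\bf w}_{(k+1)}-{\bf v}_{(k+1)}={\bf u}_{(k+1)}-{\bf u}_{(k)}$ off the dual update) coincide with the paper's argument. The gap is in your middle step. The descent estimate actually available from the proof of Lemma \ref{lem_monotonicity} has the form $\mathcal{L}_{(k+1)}-\mathcal{L}_{(k)}\le \bigl[\tfrac{4}{\rho}\lambda_{\mathrm{max}}^{2}-\lambda_{\mathrm{min}}-\tfrac{\rho}{2}\bigr]\|{\bf v}_{(k+1)}-{\bf v}_{(k)}\|_{2}^{2}$, i.e., it is quadratic \emph{only} in the ${\bf v}$-increment. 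Your proposed coefficient $\alpha>0$ on $\|{\bf w}_{(k+1)}-{\bf w}_{(k)}\|_{2}^{2}$ cannot be obtained this way: the ${\bf w}$-subproblem (\ref{ADMM_w}) has a $\rho$-strongly convex objective but a \emph{non-convex} feasible set $\{|{\bf w}^{\mathrm{H}}{\bf a}_{0}|^{2}\ge 1\}$, and a global minimizer over a non-convex set does not satisfy the inequality $f({\bf w})\ge f({\bf w}^{\star})+\tfrac{\mu}{2}\|{\bf w}-{\bf w}^{\star}\|_{2}^{2}$ that strong convexity would give over a convex set (the usual first-order optimality condition fails). All the ${\bf w}$-step yields is $\mathcal{L}({\bf w}_{(k+1)},{\bf v}_{(k)},{\bf u}_{(k)})\le\mathcal{L}({\bf w}_{(k)},{\bf v}_{(k)},{\bf u}_{(k)})$ with no quantitative modulus. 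Similarly, the ${\bf u}$-step \emph{increases} $\mathcal{L}$ by $\rho\|{\bf u}_{(k+1)}-{\bf u}_{(k)}\|_{2}^{2}$, so a direct $\gamma>0$ term does not appear; it is only recoverable by re-expressing the ${\bf u}$-increment through the ${\bf v}$-increment.

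The repair is exactly what the paper does, and it is cheaper than what you attempt: from the vanishing of the single available quadratic term one gets ${\bf v}_{(k+1)}-{\bf v}_{(k)}\to{\bf 0}$; then the closed-form relations of the algorithm propagate this to the other variables. Specifically, combining (\ref{solution_v}) with (\ref{ADMM_u}) gives the identity ${\bf u}_{(k)}=\tfrac{2}{\rho}{\bf R}_{x}{\bf v}_{(k)}$, so ${\bf u}_{(k+1)}-{\bf u}_{(k)}\to{\bf 0}$ follows algebraically rather than from a descent coefficient; and ${\bf w}_{(k+1)}$ is a fixed closed-form function of $({\bf v}_{(k)},{\bf u}_{(k)})$ via (\ref{w_bar}) and (\ref{w_widehat}), so ${\bf w}_{(k+1)}-{\bf w}_{(k)}\to{\bf 0}$ as well. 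Your final deduction of ${\bf w}_{(k)}={\bf v}_{(k)}$ then goes through unchanged. So the theorem is safe, but the strengthened three-term descent inequality you lean on is not provable as stated, and the ${\bf w}$-term in particular is where it breaks.
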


\begin{proof}
See Appendix \ref{proof_theorem_convergence}.
\end{proof}

Moreover, the following theorem shows that the limit point of Algorithm \ref{admm_alg} is a stationary point.

\begin{theorem}
\label{theorem_limit_stationary}
Denote the limit point obtained by using Algorithm \ref{admm_alg} as $({\bf w}_{(k+1)}, {\bf v}_{(k+1)}, {\bf u}_{(k+1)})$. Then, it satisfies the KKT conditions of Problem (\ref{ADMM_auxility}), as
\begin{subequations}
\begin{align}
{\bf 0} & = 2{\bf R}_{x}{\bf v}_{(k+1)} - {\bf y}_{(k+1)}, \\
{\bf w}_{(k+1)} & \in \arg\min_{{\bf w}}  \left\{ \!\!
\begin{array}{l}
 \lambda\|{\bf w}\|_{1} + \mu^{\star}(|{\bf w}^{\mathrm{H}}{\bf a}_{0}|^{2} - 1) \\
 + ~\! \Re\{ \langle {\bf y}_{(k+1)} , {\bf w} \! - \! {\bf v}_{(k+1)} \rangle \}
 \end{array}
\!\! \right\}, \\
{\bf w}_{(k+1)} & = {\bf v}_{(k+1)},
\end{align}
\end{subequations}
where ${\bf y}_{(k+1)} = \rho{\bf u}_{(k+1)}$ is the dual variable corresponding to the equality constraint in (\ref{ADMM_st}), and $\mu^{\star}$ denotes the optimal dual variable corresponding to the inequality constraint in (\ref{ADMM_st}). In words, any limit point of Algorithm \ref{admm_alg} is a stationary solution to Problem (\ref{ADMM_auxility}).
\end{theorem}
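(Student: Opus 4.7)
The plan is to derive the three KKT identities directly from the per-block minimizations in (\ref{ADMM_iterate}) and then pass to the limit using Theorem \ref{theorem_convergence}. The key bridge is the standard identification ${\bf y}_{(k+1)} = \rho{\bf u}_{(k+1)}$ between the scaled-form ADMM dual iterate and the ordinary Lagrange multiplier for the consensus constraint ${\bf w} = {\bf v}$; with this in hand, each block update already enforces a piece of the overall KKT system at every iteration, and the residuals that separate an iteration-wise identity from a limit-wise one are exactly the quantities that Theorem \ref{theorem_convergence} drives to zero.

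The first and third identities are easy. Since the ${\bf v}$-update (\ref{ADMM_v}) is an unconstrained, strongly convex quadratic minimization, its first-order optimality condition is $2{\bf R}_{x}{\bf v}_{(k+1)} = \rho({\bf w}_{(k+1)} - {\bf v}_{(k+1)} + {\bf u}_{(k)})$. By the dual update (\ref{ADMM_u}) the bracketed quantity equals ${\bf u}_{(k+1)}$, so $2{\bf R}_{x}{\bf v}_{(k+1)} = \rho{\bf u}_{(k+1)} = {\bf y}_{(k+1)}$ holds at every iteration and hence at the limit. The primal feasibility identity ${\bf w}_{(k+1)} = {\bf v}_{(k+1)}$ is inherited verbatim from Theorem \ref{theorem_convergence}.

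The second identity (the ${\bf w}$-stationarity) takes more work because (\ref{ADMM_w}) is non-convex: its constraint is ``reverse convex.'' I would use the explicit minimizer (\ref{w_widehat}) produced by Proposition \ref{theorem_closest} and invoke the necessary KKT conditions for this subproblem to produce a multiplier $\mu_{(k+1)}$ (with the appropriate sign convention) and complementary slackness such that ${\bf 0} \in \partial(\lambda\|{\bf w}_{(k+1)}\|_{1}) + \rho({\bf w}_{(k+1)} - {\bf v}_{(k)} + {\bf u}_{(k)}) - 2\mu_{(k+1)}{\bf a}_{0}{\bf a}_{0}^{\mathrm{H}}{\bf w}_{(k+1)}$. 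Rewriting the quadratic residue via the dual update as $\rho({\bf w}_{(k+1)} - {\bf v}_{(k)} + {\bf u}_{(k)}) = {\bf y}_{(k+1)} + \rho({\bf v}_{(k+1)} - {\bf v}_{(k)})$ and using ${\bf v}_{(k+1)} - {\bf v}_{(k)} \to {\bf 0}$ from Theorem \ref{theorem_convergence}, the subgradient inclusion collapses, in the limit, to precisely the subdifferential of the Lagrangian displayed in the theorem, so that ${\bf w}_{(k+1)}$ lies in the $\arg\min$ set with $\mu^{\star}$ the limit of $\mu_{(k+1)}$.

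The main obstacle is establishing that $\mu_{(k+1)}$ itself converges. I would extract it directly from (\ref{w_widehat}): it vanishes whenever ${\bf{\bar{w}}}_{(k+1)}$ already satisfies $|{\bf{\bar{w}}}_{(k+1)}^{\mathrm{H}}{\bf a}_{0}|^{2} \geq 1$, and otherwise it is the explicit coefficient obtained by matching the ${\bf a}_{0}$-correction in (\ref{w_widehat}) against $2\mu_{(k+1)}{\bf a}_{0}{\bf a}_{0}^{\mathrm{H}}{\bf w}_{(k+1)}/\rho$, which is a continuous function of $({\bf{\bar{w}}}_{(k+1)}, {\bf a}_{0})$. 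Since ${\bf{\bar{w}}}_{(k+1)}$ is a continuous function of $({\bf v}_{(k)}, {\bf u}_{(k)})$ and the latter converges by Theorem \ref{theorem_convergence}, continuity delivers convergence of $\mu_{(k+1)}$ to some $\mu^{\star}$, closing the argument and yielding all three KKT identities.
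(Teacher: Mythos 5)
Your treatment of the first and third identities coincides with the paper's: the paper likewise reads the first identity off Line~4 of Algorithm~\ref{admm_alg} combined with the dual update and ${\bf y}_{(k+1)} = \rho{\bf u}_{(k+1)}$, and inherits primal feasibility from Theorem~\ref{theorem_convergence}. For the second identity, however, you take a genuinely different route. The paper argues at the level of the $\arg\min$ itself: it inserts the constraint into the objective of (\ref{ADMM_w}) with the multiplier $\mu^{\star}$, expands $\frac{\rho}{2}\|{\bf w}-{\bf v}_{(k)}+{\bf u}_{(k)}\|_{2}^{2}$, discards the quadratic term using ${\bf w}_{(k+1)}={\bf w}_{(k)}={\bf v}_{(k)}$ at the limit point, and relabels ${\bf u}_{(k)},{\bf v}_{(k)}$ via (\ref{u_k}) and (\ref{wv_k}) to land on the displayed objective. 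You instead work at the level of first-order conditions: extract an explicit subproblem multiplier $\mu_{(k+1)}$ from the closed form (\ref{w_widehat}), write the subgradient inclusion for (\ref{ADMM_w}), use the dual update to expose the residual $\rho({\bf v}_{(k+1)}-{\bf v}_{(k)})$, and let Theorem~\ref{theorem_convergence} kill it. Your route buys something the paper's does not: the paper's step (a) silently asserts that the constrained $\arg\min$ equals the penalized $\arg\min$ for a subproblem with a reverse-convex constraint, which is a Lagrangian-exactness claim it never justifies, whereas your subgradient argument sidesteps it entirely. The price is the extra bookkeeping you correctly identify — existence and convergence of $\mu_{(k+1)}$ — which needs a constraint qualification for the non-convex subproblem (harmless here since ${\bf a}_{0}\neq{\bf 0}$ makes the active gradient nonvanishing on $|{\bf w}^{\mathrm{H}}{\bf a}_{0}|^{2}=1$) and your continuity argument via (\ref{w_widehat}). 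One point to make explicit in a final write-up: a vanishing subgradient only yields the \emph{global} $\arg\min$ inclusion stated in the theorem when the penalized objective is convex, i.e., when $\mu^{\star}(|{\bf w}^{\mathrm{H}}{\bf a}_{0}|^{2}-1)$ enters with a nonnegative coefficient as displayed; with the standard Lagrangian sign for a $\geq$ constraint the term enters negatively and one obtains stationarity rather than global minimality. This sign tension is present in the paper's own statement and proof as well, so it does not count against you, but it is worth a sentence.
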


\begin{proof}
See Appendix \ref{proof_theorem_limit_stationary}.
\end{proof}

\begin{lemma}
\label{lemma_KKT_originalProb}
The limit point obtained by using Algorithm \ref{admm_alg} is a stationary solution to Problem (\ref{L0_norm}), provided that the tuning parameter $\lambda$ is chosen such that the solution of Problem (\ref{ADMM_auxility}) is the same as that of Problem (\ref{L0_norm}). 
\end{lemma}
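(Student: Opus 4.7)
The argument is essentially short: I would chain Theorem \ref{theorem_limit_stationary} with the hypothesis on $\lambda$, observing that once Problems (\ref{ADMM_auxility}) and (\ref{L0_norm}) share the same solution, ADMM stationarity for the former automatically transports to $\ell_{0}$-stationarity for the latter.

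Concretely, by Theorem \ref{theorem_limit_stationary} the limit triple $({\bf w}_{(k+1)}, {\bf v}_{(k+1)}, {\bf u}_{(k+1)})$ is KKT for Problem (\ref{ADMM_auxility}), and by Theorem \ref{theorem_convergence} we additionally have ${\bf w}_{(k+1)} = {\bf v}_{(k+1)}$ at the limit, so the variable splitting is tight. Together these yield that $\widehat{\bf w} \triangleq {\bf w}_{(k+1)}$ is a stationary solution of the $\ell_{1}$-regularized Problem (\ref{L1_norm}), of which (\ref{ADMM_auxility}) is merely a split reformulation. Under the stated hypothesis, $\lambda$ is tuned so that Problems (\ref{ADMM_auxility}) and (\ref{L0_norm}) produce the same solution; in particular $\|\widehat{\bf w}\|_{0} = L$, so $\widehat{\bf w}$ is primal-feasible for (\ref{L0_norm}).

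Letting $\mathcal{S} \triangleq \{i : \widehat{w}_{i} \neq 0\}$, Problem (\ref{L0_norm}) on this fixed support reduces to the smooth QCQP of minimizing ${\bf w}_{\mathcal{S}}^{\mathrm{H}} {\bf R}_{x}[\mathcal{S}, \mathcal{S}] {\bf w}_{\mathcal{S}}$ subject to $|{\bf w}_{\mathcal{S}}^{\mathrm{H}} {\bf a}_{0, \mathcal{S}}|^{2} \geq 1$, whose standard KKT conditions (Lagrangian stationarity on $\mathcal{S}$, primal feasibility, and complementary slackness on the single inequality) furnish the natural definition of a stationary solution of (\ref{L0_norm}) at this support. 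I would then verify these conditions for $\widehat{\bf w}$: the coordinate restriction ${\bf w}_{\mathcal{S}^{c}} = {\bf 0}$ is automatic; $|\widehat{\bf w}^{\mathrm{H}} {\bf a}_{0}|^{2} \geq 1$ is inherited from the $\ell_{1}$ problem's constraint; and the stationarity on $\mathcal{S}$ follows from the hypothesis identifying $\widehat{\bf w}$ with a minimizer of the support-restricted QCQP, which must satisfy the corresponding first-order necessary conditions with suitable multipliers.

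The main obstacle is conceptual rather than computational: the $\ell_{0}$ constraint is not amenable to classical KKT theory directly, so one has to first pin down the active support via the hypothesis and only then interpret stationarity on the resulting smooth restricted problem. In particular, the $\ell_{1}$-stationarity delivered by Theorem \ref{theorem_limit_stationary} carries an extra $\lambda\,\text{sign}(\widehat{\bf w}_{\mathcal{S}})$ subgradient term that is absent from the $\ell_{0}$-KKT, and the hypothesis is precisely what bridges this gap: it lets us invoke necessity of first-order conditions for the support-restricted QCQP rather than manipulate the subgradient inclusion coordinate-by-coordinate.
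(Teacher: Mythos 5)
Your proposal is correct in substance but takes a genuinely different route from the paper. The paper keeps the constraint $\|{\bf w}\|_{0} = L$ explicitly in a reformulated problem, attaches a formal Lagrange multiplier $\nu^{\star}$ to it, and writes a KKT system for Problem (\ref{L0_norm}) that mirrors the one in Theorem \ref{theorem_limit_stationary}; conditions analogous to (\ref{KKT_a}) and (\ref{KKT_c}) carry over verbatim, and the remaining argmin condition is obtained by a three-step chain that (a) starts from the $\ell_{1}$-stationarity of the limit point, (b) invokes the hypothesis on $\lambda$ (citing RIP-type conditions) to swap the $\lambda\|{\bf w}\|_{1}$ penalty for the constraint $\|{\bf w}\|_{0} = L$, and (c) folds that constraint back into the objective via $\nu^{\star}$. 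You instead fix the active support $\mathcal{S}$ of the limit point and interpret stationarity of (\ref{L0_norm}) through the smooth support-restricted QCQP, verifying its classical KKT conditions. Your route buys a cleaner notion of stationarity: it avoids treating the discontinuous $\ell_{0}$ functional as if it admitted a multiplier and a first-order condition, which is the least rigorous aspect of the paper's formulation. The paper's route buys direct structural parallelism with Theorem \ref{theorem_limit_stationary} and lets two of the three KKT conditions be inherited without further work. Note, however, that both arguments lean on the same load-bearing and somewhat informal step: Theorem \ref{theorem_limit_stationary} only delivers a \emph{stationary} point of Problem (\ref{ADMM_auxility}), whereas the hypothesis on $\lambda$ speaks of \emph{the solution}; when you say the hypothesis "identif[ies] $\widehat{\bf w}$ with a minimizer of the support-restricted QCQP," you are implicitly upgrading the limit point to a global minimizer, exactly as the paper does in its step (b). Flagging that identification as an additional assumption (or restricting the claim to the case where the limit point is indeed the global solution) would make your version airtight, and the same caveat applies to the paper's own proof.
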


\begin{proof}
See Appendix \ref{proof_lemma_KKT_originalProb}.
\end{proof}

\section{ADMM with Re-weighted $\ell_{1}$-norm}
\label{ADMM_enhanced_sparsity}
As has been well-documented in the literature, see e.g. \cite{Candes2008}, the iteratively re-weighted $\ell_{1}$-norm penalty has remarkable advantages over the conventional $\ell_{1}$-norm. Therefore, in this section, we propose an improved approach on the basis of Algorithm \ref{admm_alg}, by replacing the $\ell_{1}$-norm regularization in (\ref{L1_norm}) with the re-weighted $\ell_{1}$-norm. That is, Problem (\ref{L1_norm}) is modified as \vspace{-3mm}
\begin{subequations}
\label{L1_norm_enhanced}
\begin{align}
\min_{{\bf w}} ~~~ & {\bf w}^{\mathrm{H}}{\bf R}_{x}{\bf w} + \lambda\| {\bf 1} \oslash (|{\bf g}| + \epsilon) \odot {\bf w}\|_{1}  \\ 
\text{subject to} ~ & |{\bf w}^{\mathrm{H}}{\bf a}_{0}|^{2} \geq 1,
\end{align}
\end{subequations}
where ${\bf g}$ equals ${\bf w}$ obtained from the previous iteration, and $\epsilon > 0$ is a small scalar providing stability and ensuring that a zero-valued component in ${\bf w}$ does not strictly prohibit a non-zero estimate at the next iteration. Note that once the non-zero entries of the solution of Problem (\ref{L1_norm_enhanced}) are identified, their influence is down-weighted in order to allow more sensitivity for identifying the remaining small but non-zero entries \cite{Candes2008}. This results in a better behavior of (\ref{L1_norm_enhanced}) than (\ref{L1_norm}), which will be corroborated in Section \ref{convergenceandsparsitycontrol}.

The ADMM iteration for Problem (\ref{L1_norm_enhanced}) is the same as (\ref{ADMM_iterate}) except for (\ref{ADMM_w}) which should be replaced by
\begin{align}
\!\!\! {\bf w} \! \gets \! \left\{ \!\!\! \begin{array}{l}
\displaystyle\arg\min_{{\bf w}} ~ \lambda\|{\bf 1} \! \oslash \! (|{\bf g}| \! + \! \epsilon) \! \odot \! {\bf w}\|_{1} + \frac{\rho}{2}\|{\bf w} \! - \! {\bf v} \! + \! {\bf u}\|_{2}^{2} \\ 
\text{subject to} ~ |{\bf w}^{\mathrm{H}}{\bf a}_{0}|^{2} \geq 1.
\end{array}
\right.
\end{align}
Accordingly, the result of ${\bf{\bar{w}}}$ in (\ref{w_bar}) now becomes
\begin{align}
{\bf{\bar{w}}} = \text{sign}({\bf v} - {\bf u}) \odot \left(|{\bf v} - {\bf u}| - {(\lambda {\bf 1})} \oslash {[\rho (|{\bf g}| + \epsilon)] }\right)_{{\bold +}},
\end{align}
and the complete ADMM for solving Problem (\ref{L1_norm_enhanced}) is summarized in Algorithm \ref{admm_enhanced_alg}. The convergence property of Algorithm \ref{admm_enhanced_alg}, and the comparison between Algorithms \ref{admm_alg} and \ref{admm_enhanced_alg}, will be presented using simulations in Section \ref{convergenceandsparsitycontrol}.


\begin{algorithm}[t]
	\caption{ADMM for solving Problem (\ref{L1_norm_enhanced})}
	\label{admm_enhanced_alg}
	\textbf{Input~~~~\!:} ${\bf R}_{x} \in \mathbb{C}^{M \times M}$, ${\bf a}_{0} \in \mathbb{C}^{M}$, $\lambda$, $\rho$, $\epsilon$, $k_{\mathrm{max}}$, $\eta$ \\
	\textbf{Output~~\!:} ${\bf{\widehat w}} \in \mathbb{C}^{M}$ \\
	\textbf{Initialize:} ${\bf v}_{(0)} \gets {\bf v}_{\text{init}}$, ${\bf u}_{(0)} \gets {\bf u}_{\text{init}}$, $k \gets 0$
	\begin{algorithmic}[1]
		\While {not converged} \\
		 ~~ ${\bf{\bar{w}}}_{(k+1)} \gets \text{sign}({\bf v}_{(k)} - {\bf u}_{(k)}) ~ \odot $\\  
		 $ \qquad \qquad \qquad \quad \left( |{\bf v}_{(k)} - {\bf u}_{(k)}| - (\lambda {\bf 1}) \! \oslash \! [\rho (|{\bf v}_{(k)}| + \epsilon)] \right)_{{\bold +}} $ \\
		 ~~ ${\bf w}_{(k+1)} \gets {\bf{\bar{w}}}_{(k+1)} + \frac{\left( 1 - |{\bf{\bar{w}}}_{(k+1)}^{\mathrm{H}}{\bf a}_{0}| \right)_{{\bold +}}}{\|{\bf a}_{0}\|_{2}^{2}|{\bf{\bar{w}}}_{(k+1)}^{\mathrm{H}}{\bf a}_{0}|} {\bf a}_{0}{\bf{\bar{w}}}_{(k+1)}^{\mathrm{H}}{\bf a}_{0} $ \\
		 ~~ ${\bf v}_{(k+1)} \gets \rho(2{\bf R}_{x} + \rho{\bf I})^{-1}({\bf w}_{(k+1)} + {\bf u}_{(k)})$ \\
		 ~~ ${\bf u}_{(k+1)} \gets {\bf u}_{(k)} + {\bf w}_{(k+1)} - {\bf v}_{(k+1)}$ \\
		~~ converged $\gets$ $ k \! + \! 1 \geq k_{\mathrm{max}} $ or $ \| {\bf w}_{(k+1)} \! - \! {\bf v}_{(k+1)} \|_{2} \leq \eta $ \\
		 ~~ $k \gets k+1$
		\EndWhile \State \textbf{end while} \\
		${\bf{\widehat w}} \gets {\bf w}_{(k)}$
	\end{algorithmic}
\end{algorithm}

\section{Simulation Results}
\label{simulation}
In this section, numerical examples are conducted to demonstrate the effectiveness of the proposed algorithms, i.e., Algorithms \ref{admm_alg} and \ref{admm_enhanced_alg}. We first examine the behaviors of the algorithms in Section \ref{convergenceandsparsitycontrol}, in terms of convergence property and beamformer weight sparsity control. Then, in Section \ref{computationalcost}, we compare the computational complexity of the proposed algorithms with those of the other state-of-the-art approaches, including SDR, an SDR variant, and SCA, presented in \cite{Mehanna2012}, \cite{Hamza2019}, and \cite{Park2017}, respectively. In Section \ref{beamformingperformance}, we finally test the performance of sparse array beamformers designed by using different strategies, in terms of array beampattern and output SINR.

\subsection{Convergence and Sparsity Control}
\label{convergenceandsparsitycontrol}
\textbf{First example:} A compact ULA consisting of $M = 12$ antenna sensors is utilized, while $T = 100$ snapshots, one SOI from $\theta_{0} = 0^{\circ}$ and $K = 2$ interference signals from $\theta_{1} = -10^{\circ}$ and $\theta_{2} = 10^{\circ}$, respectively, are considered. The signal-to-noise ratio (SNR) and interference-to-noise ratio (INR) are $\text{SNR} = 10 ~ \text{dB}$ and $\text{INR} = 20 ~ \text{dB}$, respectively. The two proposed algorithms, i.e., Algorithms \ref{admm_alg} and \ref{admm_enhanced_alg} are examined, where the parameters are $\epsilon = 10^{-10}$, $k_{\mathrm{max}} = 10^{3}$, $\eta = 10^{-12}$, ${\bf u}_{\text{init}} = {\bf 0}$, and ${\bf v}_{\text{init}}$ is drawn from the complex standard normal distribution. Three scenarios with different values of the tuning parameter $\lambda$ and the augmented Lagrangian parameter $\rho$ are considered. The results of $\mathcal{L}({\bf w}_{(k)}, {\bf v}_{(k)}, {\bf u}_{(k)})$ versus the ADMM iteration index $k$ are given in Fig. \ref{funcvsnumiter}. It is seen that the function value sequence $\{\mathcal{L}({\bf w}_{(k)}, {\bf v}_{(k)}, {\bf u}_{(k)})\}$ of Algorithm \ref{admm_alg} is monotonically non-increasing and bounded from below, which is consistent with the theoretical analyses in Section \ref{convergence_ADMM}. Additionally, we also observe that although the function value sequence $\{\mathcal{L}({\bf w}_{(k)}, {\bf v}_{(k)}, {\bf u}_{(k)})\}$ of Algorithm \ref{admm_enhanced_alg} is not monotonically non-increasing, it converges eventually. Moreover, Algorithm \ref{admm_enhanced_alg} converges faster than Algorithm \ref{admm_alg}.

\begin{figure}[t]
	\vspace*{-6mm}
	\centerline{\includegraphics[width=0.5\textwidth]{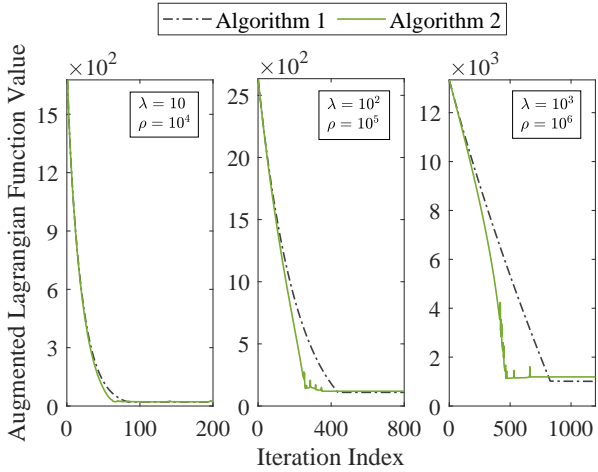}}    
	\caption{Function value of $\mathcal{L}({\bf w}_{(k)}, {\bf v}_{(k)}, {\bf u}_{(k)})$ versus iteration index $k$. (1st example)}
	\label{funcvsnumiter}
\end{figure}

\textbf{Second example:} A compact ULA of $M = 12$ antenna sensors is used, and we wish to select $L$ sensors out of them in the beamformer. We consider 8 situations with different SNR, decreasing from $20 ~ \text{dB}$ to $-15 ~ \text{dB}$ with a stepsize of $5 ~ \text{dB}$. The interference-to-noise ratio is $\text{INR} = 10 ~ \text{dB}$ or $\text{INR} = 20 ~ \text{dB}$, the augmented Lagrangian parameter is $\rho = 2 \times 10^{4}$, and the other parameters are the same as those in the first example. We test the sparsity of the beamformer weight obtained via Algorithms \ref{admm_alg} and \ref{admm_enhanced_alg} w.r.t. the tuning parameter $\lambda$. The curves are averaged over 1000 Monte Carlo runs, and they are displayed in Fig. \ref{sparsityofweight}. It can be seen that, for all 8 situations, any level of sparsity (from 1 to 11) could be attained by both algorithms, and that a larger $\lambda$ produces a smaller sparsity of ${\bf{\widehat{w}}}$, as expected. When SNR increases, the curve becomes more gentle. In addition, the curves by using Algorithm \ref{admm_alg} decrease far more rapidly than those by using Algorithm \ref{admm_enhanced_alg}, which implies that it is much easier to tune $\lambda$ for a specific level of sparsity when Algorithm \ref{admm_enhanced_alg} is employed.

\begin{figure}[t]
	\vspace*{-6mm}
	\centering
	\begin{subfigure}[b]{0.5\textwidth}
		\includegraphics[width=\textwidth]{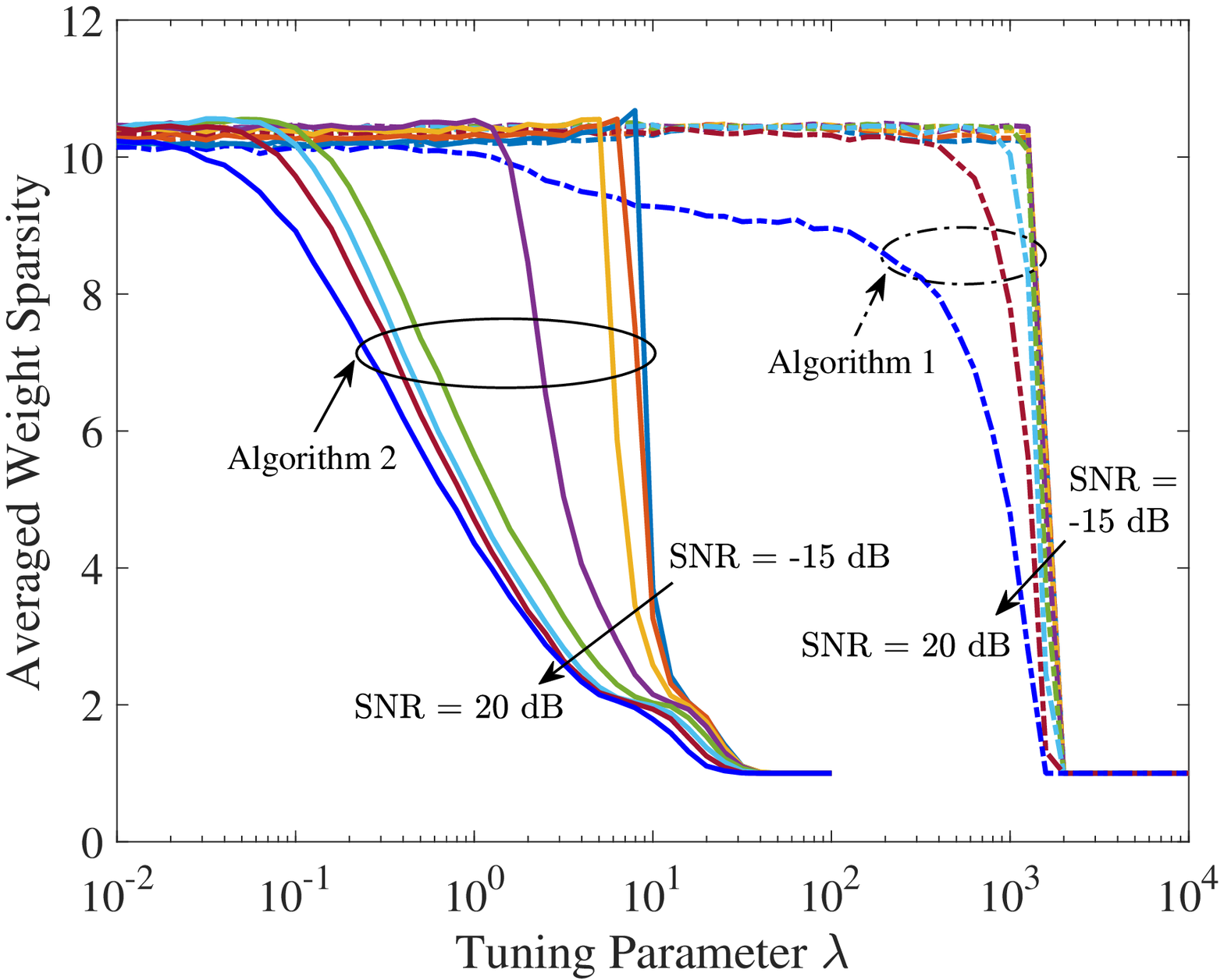}  
		\caption{}
		\label{sparsity_1}
	\end{subfigure}
	~ 
	\begin{subfigure}[b]{0.5\textwidth}
		\includegraphics[width=\textwidth]{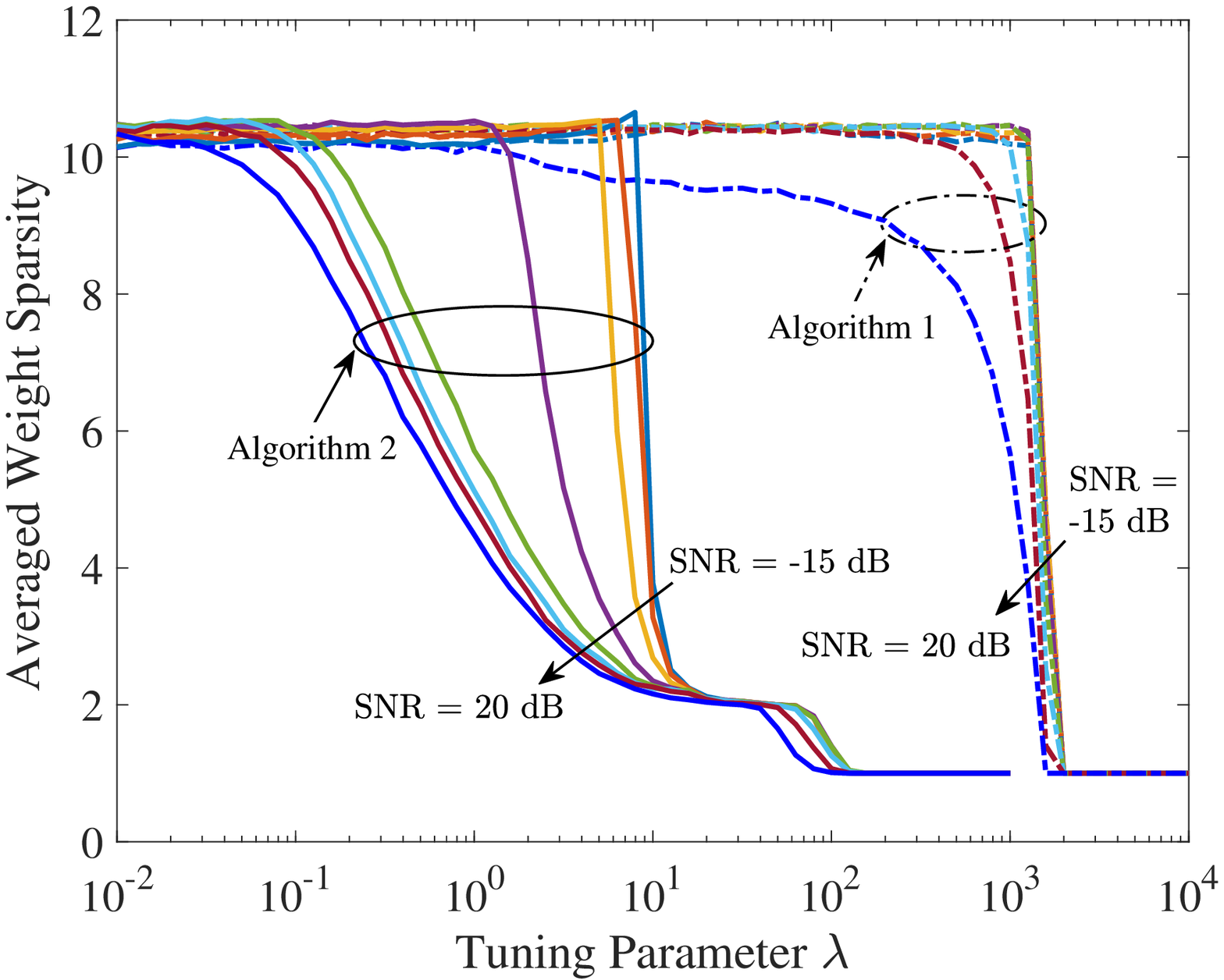}  
		\caption{}
		\label{sparsity_1}
	\end{subfigure}
	\caption{Sparsity of beamformer weight versus tuning parameter $\lambda$ with (a) $\text{INR} = 10 ~ \text{dB}$ and (b) $\text{INR} = 20 ~ \text{dB}$. (2nd example)}
	\label{sparsityofweight}
\end{figure}


Because of the better behavior of Algorithm \ref{admm_enhanced_alg}, rather than both Algorithms \ref{admm_alg} and \ref{admm_enhanced_alg}, we solely consider Algorithm \ref{admm_enhanced_alg} in the remaining simulations, and it will be labelled as ``ADMM''.

\subsection{Computational Complexity}
\label{computationalcost}
We start by rewriting the state-of-the-art methods to suit our problem, i.e., Problem (\ref{L1_norm_enhanced}), using our notations, as \cite{Mehanna2012}
\begin{subequations}
\label{sdr_prob}
\begin{align}
\min_{{\bf W}} ~~~~ & \text{Tr}\{{\bf R}_{x}{\bf W}\} + \lambda\text{Tr}\{{\bf B}  |{\bf W}| \} \\
 \text{subject to} ~ & \left\{ 
\begin{array}{l}
\text{Tr}\{{\bf a}_{0}{\bf a}_{0}^{\mathrm{H}}{\bf W}\} \geq 1, \\
{\bf W} \succeq 0,
\end{array}
\right.
\end{align}
\end{subequations}
and \cite{Hamza2019}
\begin{subequations}
\label{sdrv_prob}
\begin{align}
\min_{{\bf W}, {\bf{\widetilde{W}}}} ~~~~ & \text{Tr}\{{\bf R}_{x}{\bf W}\} + \lambda\text{Tr}\{{\bf B}  {\bf{\widetilde{W}}} \} \\
 \text{subject to} ~ & \left\{ 
\begin{array}{l}
\text{Tr}\{{\bf a}_{0}{\bf a}_{0}^{\mathrm{H}}{\bf W}\} \geq 1, \\
{\bf W} \succeq 0, \\
{\bf{\widetilde{W}}} \geq |{\bf W}|,
\end{array}
\right.
\end{align}
\end{subequations}
where ${\bf B} = {\bf 1}_{M} \oslash (|{\bf G}| + \epsilon)$, with ${\bf G}$ being equal to ${\bf W}$ obtained from the previous iteration. Problems (\ref{sdr_prob}) and (\ref{sdrv_prob}) are referred to as SDR and SDR-V (short for ``SDR Variant''), respectively, in the remaining simulations. If the solution ${\bf W}$ to Problems (\ref{sdr_prob}) and (\ref{sdrv_prob}) is of rank one, their beamformer weight can be calculated as the principal eigenvector of ${\bf W}$; otherwise, extra post-processing based on randomization is required \cite{Sidiropoulos2006}. 

On the other hand, Problem (\ref{L1_norm_enhanced}) can be recast as \cite{Park2017}
\begin{subequations}
\label{sca_prob}
\begin{align}
\min_{{\bf w}} ~~~ & {\bf w}^{\mathrm{H}}{\bf R}_{x}{\bf w} + \lambda\| {\bf b} \odot {\bf w}\|_{1}  \\ 
\text{subject to} ~ & 2\Re\{ {\bf g}^{\mathrm{H}}{\bf a}_{0}{\bf a}_{0}^{\mathrm{H}}{\bf w} \} - |{\bf g}^{\mathrm{H}}{\bf a}_{0}|^{2} \geq 1,
\end{align}
\end{subequations}
where ${\bf b} = {\bf 1} \oslash (|{\bf g}| + \epsilon)$ with $\bf g$ being equal to ${\bf w}$ obtained from the previous iteration, the same as what has been introduced in Section \ref{ADMM_enhanced_sparsity}. Problem (\ref{sca_prob}) is denoted as SCA in the remaining simulations. The key of the success of SCA method lies in the fact that $|{\bf w}^{\mathrm{H}}{\bf a}_{0}|^{2}$ is a convex function w.r.t. ${\bf w}$ and thus $|{\bf w}^{\mathrm{H}}{\bf a}_{0}|^{2} \geq 2\Re\{ {\bf g}^{\mathrm{H}}{\bf a}_{0}{\bf a}_{0}^{\mathrm{H}}{\bf w} \} - |{\bf g}^{\mathrm{H}}{\bf a}_{0}|^{2}$ holds for all ${\bf w}$ and any given (known) ${\bf g}$. 

If a general-purpose SDR solver, such as the interior point method, is adopted to solve Problems (\ref{sdr_prob}) and (\ref{sdrv_prob}), the worst case complexity can be as high as $\mathcal{O}(M^{6.5})$ per iteration \cite{Huang2016}. The cost of solving Problem (\ref{sca_prob}) could be smaller, if further effort is made, for instance, by taking care of the structure of the problem. However, since this is out of the scope of this paper, in our simulations, we simply utilize the CVX toolbox \cite{cvx} to solve the aforementioned three problems. 

The computational costs of the proposed Algorithm \ref{admm_alg} and Algorithm \ref{admm_enhanced_alg} are the same in terms of big O notation. In what follows, we analyse the cost of Algorithm \ref{admm_enhanced_alg} in detail. The cost of computing ${\bf{\bar w}}_{(k+1)}$ (Lines 2 and 3 in Algorithm \ref{admm_enhanced_alg}) is $\mathcal{O}(M)$. The cost of computing ${\bf w}_{(k+1)}$ (Line 4 in Algorithm \ref{admm_enhanced_alg}) is $\mathcal{O}(M)$. The cost of computing ${\bf v}_{(k+1)}$ (Line 5 in Algorithm \ref{admm_enhanced_alg}) is $\mathcal{O}(M^{3}) + \mathcal{O}(M^{2})$, where $\mathcal{O}(M^{3})$ corresponds to the inversion operation, i.e., $(2{\bf R}_{x} + \rho{\bf I})^{-1}$, and $\mathcal{O}(M^{2})$ is from the matrix multiplication. Note that we can cache the result of $(2{\bf R}_{x} + \rho{\bf I})^{-1}$, to save computations in the subsequent iterations. Finally, the cost of computing ${\bf u}_{(k+1)}$ (Line 6 in Algorithm \ref{admm_enhanced_alg}) is $\mathcal{O}(M)$. Therefore, the total computational cost of Algorithm \ref{admm_enhanced_alg} is $\mathcal{O}(M^{3}) + \mathcal{O}(K_{\text{admm}}M^{2})$, where $K_{\text{admm}}$ denotes the number of iterations required by the proposed ADMM. Note that the above analyses are based on a fixed $\lambda$. We may adjust $\lambda$ to ensure $L$ sensors are selected, as mentioned in Remark \ref{remark_L}. Let $K_{\lambda}$ be the number of $\lambda$ values that are considered, then the total complexity cost of the proposed ADMM is $\mathcal{O}(K_{\lambda}M^{3}) + \mathcal{O}(K_{\lambda}K_{\text{admm}}M^{2})$.

It is worth noting that when ${\bf B}$ in (\ref{sdr_prob}) and (\ref{sdrv_prob}) and ${\bf b}$ in (\ref{sca_prob}) are fixed as ${\bf B} = {\bf 1}_{M}$ and ${\bf b} = {\bf 1}$, Problems (\ref{sdr_prob}), (\ref{sdrv_prob}), and (\ref{sca_prob}) reduce to three approaches for solving Problem (\ref{L1_norm}), which correspond to Algorithm \ref{admm_alg}. As has been confirmed by the numerical results in Section \ref{convergenceandsparsitycontrol}, algorithms with re-weighted $\ell_{1}$-norm regularization are more efficient in the sense that they converge faster and are much easier to control the sparsity of solution, compared to their counterparts with conventional $\ell_{1}$-norm regularization. Therefore, only the former group of approaches, i.e., the above-mentioned SDR (\ref{sdr_prob}), SDR-V (\ref{sdrv_prob}), SCA (\ref{sca_prob}), and Algorithm \ref{admm_enhanced_alg}, are considered in the following simulations.

\textbf{Third example:} We wish to choose $L = 4$ out of $M = 12$ antenna sensors from a compact ULA. One SOI from $\theta_{0} = 0^{\circ}$ and $K = 2$ interferers from $\theta_1 = -10^{\circ}$ and $\theta_{2} = 10^{\circ}$, respectively, are considered, while $\text{SNR} = 0 ~ \text{dB}$ and $\text{INR} = 20 ~ \text{dB}$. The number of snapshots $T$ varies uniformly from $10$ to $150$ with a stepsize of $10$. The other parameters for Algorithm \ref{admm_enhanced_alg} are the same as those of the first example, except for $\rho$ which is set to $\rho = 10^{3}$ in this example. The CPU times of the examined approaches are averaged over $100$ Monte Carlo runs, and they are plotted in Fig. \ref{timevssnapshot}. It is seen that their CPU times are almost unchanged when $T$ varies, and that of the ADMM method is around $10^{-1}$ seconds which is about $10^{3}$ times less than those of the SDR, SDR-V, and SCA methods (which take around $10^{2}$ seconds).

\begin{figure}[t]
	\vspace*{-4mm}
	\centerline{\includegraphics[width=0.5\textwidth]{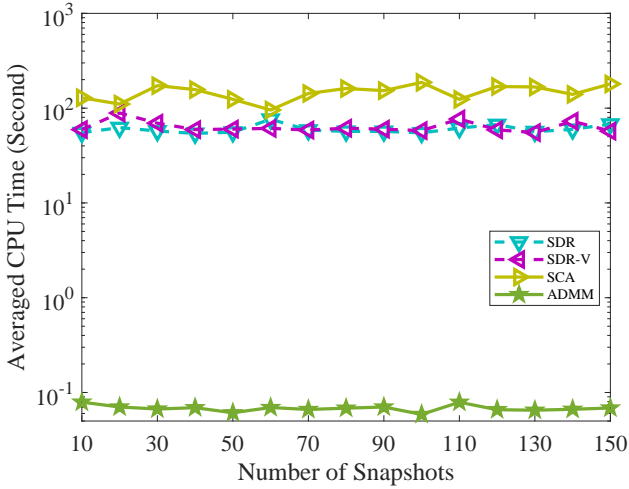}}
	\caption{Averaged CPU time versus $T$. (3rd example)}
	\label{timevssnapshot}
\end{figure}

\textbf{Fourth example:} We wish to select $L = 4$ out of $M$ sensors from a compact ULA. The number of snapshots is fixed as $T = 100$, and the number of sensors $M$ changes from $10$ to $20$. The other parameters remain unchanged as those of the third example. The CPU times of the examined methods are shown in Fig. \ref{timevsnumsensor}, from which it is seen that the CPU times of the SDR, SDR-V, and ADMM methods increase as $M$ increases, while the CPU time of the SCA method keeps almost unchanged when $M$ varies. In addition, the CPU time of the proposed algorithm is much smaller than those of the other three tested approaches. Note that there is a sharp change of the ADMM curve at $M = 15$. This is caused by the increased number of iterations when $M \geq 16$.

\begin{figure}[t]
	\vspace*{-4mm}
	\centerline{\includegraphics[width=0.5\textwidth]{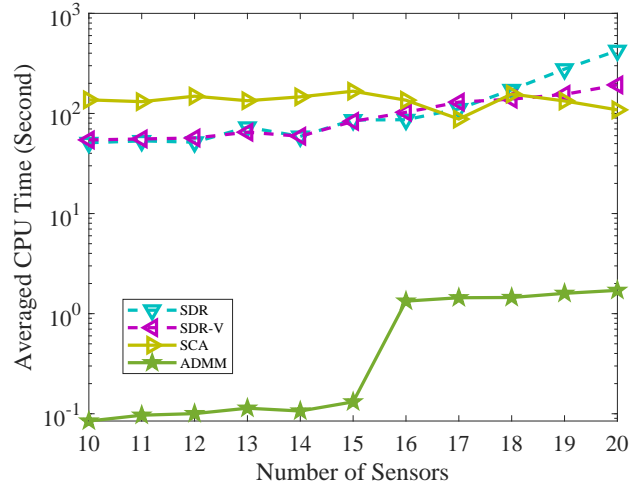}}   
	\caption{Averaged CPU time versus $M$. (4th example)}
	\label{timevsnumsensor}
\end{figure}

\textbf{Fifth example:} We wish to select $L$ out of $M = 12$ sensors from a compact ULA. The number of snapshots is fixed as $T = 100$, and the number of selected sensors $L$ changes from $3$ to $12$. The other parameters remain unchanged as those of the third example. The CPU times of the examined methods are drawn in Fig. \ref{timevsnumselectsensor}, from which it is seen that the CPU times of all the four methods increase as $L$ increases. Besides, the CPU time of the proposed algorithm is shown again much less than those of the other three tested methods.

\begin{figure}[t]
	\vspace*{-6mm}
	\centerline{\includegraphics[width=0.5\textwidth]{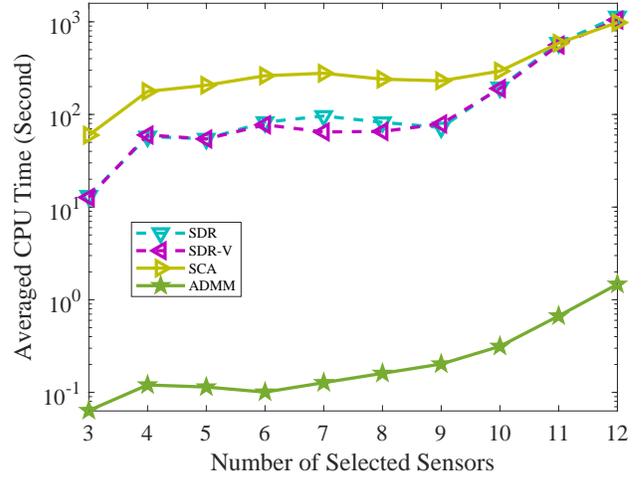}} 
	\caption{Averaged CPU time versus $L$. (5th example)}
	\label{timevsnumselectsensor}
\end{figure}

\subsection{Beamforming Performance}
\label{beamformingperformance}
In the following simulations, we examine the beamforming performance of the proposed algorithm compared with several other sparse array design strategies, in terms of beampattern and output SINR.

\textbf{Sixth example:} We choose $L = 4$ out of $M = 12$ sensors. One SOI from $\theta_{0} = 0^{\circ}$ and $K = 2$ interferers from $\theta_1 = -40^{\circ}$ and $\theta_{2} = 30^{\circ}$, are considered, while $\text{SNR} = 0 ~ \text{dB}$ and $\text{INR} = 20 ~ \text{dB}$. Different sparse array design strategies are examined, including enumeration (i.e., exhaustive search), compact ULA, sparse ULA, random array, nested array \cite{Pal2010}, coprime array \cite{Vaidyanathan2011, Qin2015}, SDR \cite{Mehanna2012}, SDR-V \cite{Hamza2019}, SCA \cite{Park2017}, and the proposed ADMM. The enumeration method tests all possible (i.e., $12 \choose 4$ $= 495$) combinations, and finds out the best and worst cases. The compact ULA includes the first $L = 4$ sensors, the sparse ULA contains the 1st, 3rd, 5th, and 7th sensors, while the random array choose $L = 4$ out of $M = 12$ sensors in a random manner. Besides, the nested array contains the 1st, 2nd, 3rd, and 6th sensors, and coprime array includes the 1st, 3rd, 4th and 5th sensors. SDR, SDR-V, and SCA refer to Problems (\ref{sdr_prob}), (\ref{sdrv_prob}), and (\ref{sca_prob}), respectively. The result of using the whole ULA is also included. Their beampatterns are depicted in Fig. \ref{beampattern}, where we separate them into two subfigures and the ADMM is drawn in both, for a better comparison. From Fig. \ref{beampattern} we observe that the proposed ADMM method provides lower sidelobe and deeper nulls towards the interferences, compared to the others. Their output SINRs in this example are given in TABLE \ref{table_sinr}.



\begin{figure}[t]
	\vspace*{-6mm}
	\centering
	\begin{subfigure}[b]{0.5\textwidth}
		\includegraphics[width=\textwidth]{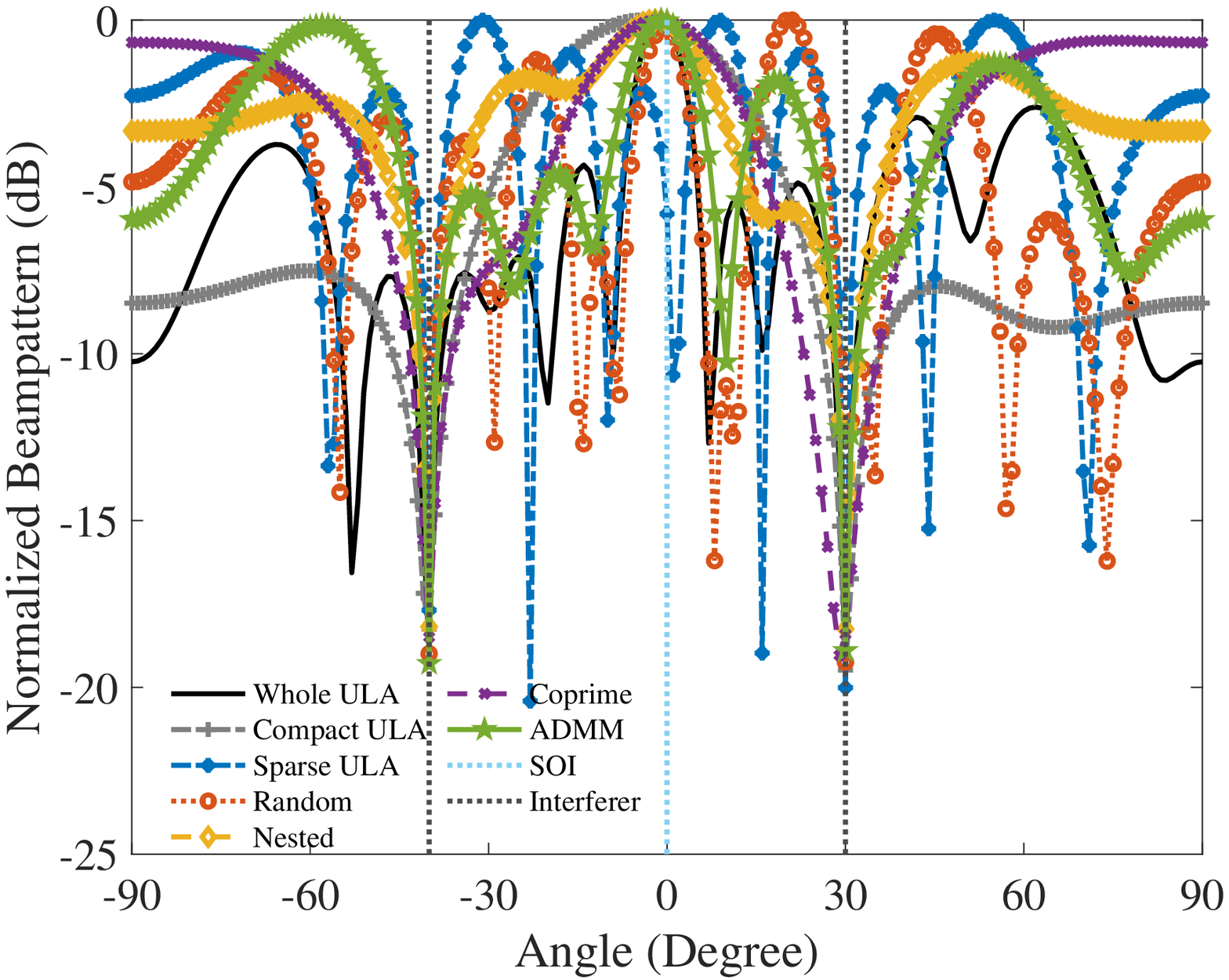}  
		\caption{}
		\label{beampattern_1}
	\end{subfigure}
	~ 
	\begin{subfigure}[b]{0.5\textwidth}
		\includegraphics[width=\textwidth]{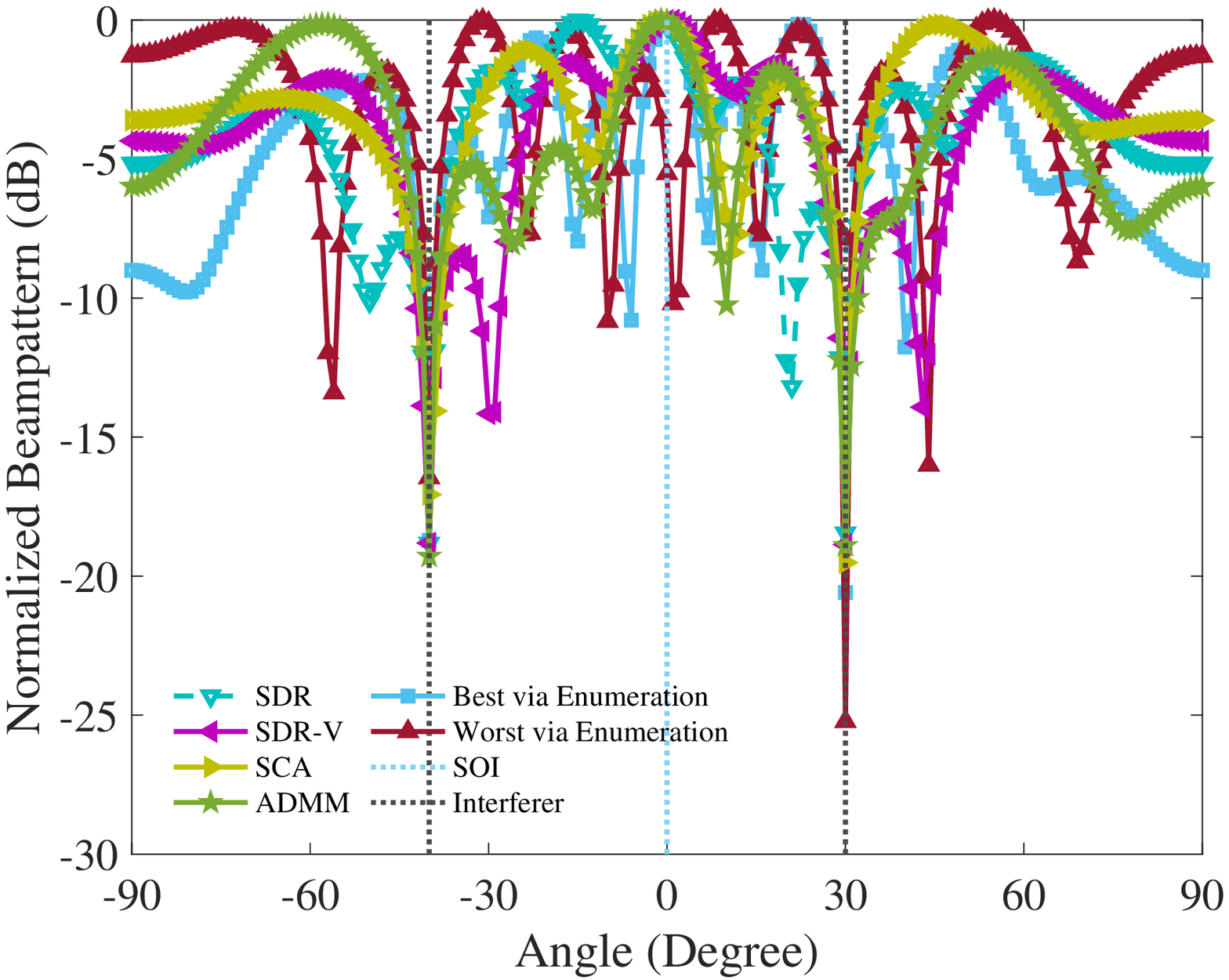}  
		\caption{}
		\label{beampattern_2}
	\end{subfigure}
	\caption{Beampattern comparison with $1$ SOI and $2$ interferers. (6th example)}
	\label{beampattern}
\end{figure}

\begin{table}[t]
\centering
\caption{Output SINR of sixth example.} 
\label{table_sinr}
	\begin{tabular}{|c|c||c|c|}
	\hline
	\textbf{Method} & \textbf{SINR (dB)} & \textbf{Method} & \textbf{SINR (dB)} \\ \hline
	Whole ULA & $7.6160$ & Coprime & $4.2298$ \\ \hline
	Best via Enum. & $5.3874$ & SDR & $4.1792$ \\ \hline
	ADMM & $5.0253$ & Nested & $4.0458$ \\ \hline
	Compact ULA & $4.9967$ & SCA & $3.9229$ \\ \hline
	SDR-V & $4.9670$ & Sparse ULA & $-7.4871$ \\ \hline
	Random & $4.9320$ & Worst via Enum. & $-7.5373$ \\ \hline
	\end{tabular}
\end{table}

\textbf{Seventh example:} We consider the scenario with one SOI whose DOA $\theta_{0}$ changes from $-60^{\circ}$ to $60^{\circ}$, and $K = 2$ interferers from $\theta_{1} = \theta_{0} - 10^{\circ}$ and $\theta_{2} = \theta_{0} + 10^{\circ}$, respectively. The remaining parameters are unchanged as those in the third example. The output SINR versus DOA of the SOI is plotted in Fig. \ref{sinrvsdoaofsoi}. Note that the result of MVDR with whole ULA using the true covariance is termed as ``Optimal with Whole ULA''. It is seen that the proposed method has excellent performance, whose output SINR is less than $0.4 ~\text{dB}$ lower than that of the best case via enumeration, very close (within $0.6 ~\text{dB}$) to the optimal SINR, slightly larger than those of the SDR, SDR-V, and SCA methods, and at least about $2 ~\text{dB}$ larger than those of the other approaches. Two exceptions occur at $\theta_{0} = -55^{\circ}$ and $\theta_{0} = 55^{\circ}$, in which cases the output SINR of ADMM is about $2.5 ~ \text{dB}$ lower than those of the best case via enumeration, SDR, SDR-V, and SCA methods, and is still significantly higher than those of the other approaches. Another interesting result is that the performance of the nested array and the coprime array is even worse than that of the random array. This is because the goal of nested array and coprime array is to make sure more continuous virtual sensors exist in their difference coarray, such that they can estimate more sources than physical sensors. In other words, nested array and coprime array are designed to obtain better performance in DOA estimation, but not necessary to have good performance in beamforming in terms of SINR.

\begin{figure}[t]
	\vspace*{-2mm}
	\centerline{\includegraphics[width=0.5\textwidth]{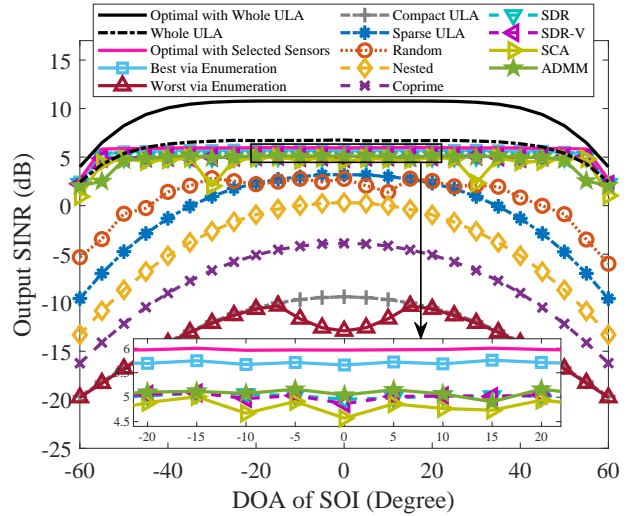}}   
	\caption{Output SINR versus DOA of SOI. (7th example)}
	\label{sinrvsdoaofsoi}
\end{figure}

%
%
%
%

\textbf{Eighth example:} The SNR varies from $-20 ~ \text{dB}$ to $12 ~ \text{dB}$ with a stepsize of $2 ~\text{dB}$. The DOA of the SOI is $\theta_{0} = 0^{\circ}$ and $K = 2$ interference signals come from $\theta_{1} = -10^{\circ}$ and $10^{\circ}$. The other parameters are unchanged compared with those of the previous example. The output SINR versus SNR is depicted in Fig. \ref{sinrvssnr_(1)}. To provide a clearer vision of the results, we calculate the SINR departure of the corresponding methods from the optimal SINR, and draw them in Fig. \ref{departurevssnr}. The figures demonstrate better performance of the proposed scheme than the other sparse array design techniques (except for the best case via enumeration) in terms of output SINR. It is also observed that in the large SNR region, the SINR of the whole array is even smaller than those of sparse arrays. This verifies the statement that the performance of beamforming is affected by not only the beamformer weight, but also the array configuration \cite{Lin1982}. The reason comes from two aspects: On one hand, the performance of MVDR beamformer degrades when SNR is high, since the SOI presents in the training data \cite{Vorobyov2014}. On the other hand, the calculation of SINR for the whole array is different from that for the sparse arrays, that is, the length of the beamformer weight and the size of ${\bf R}_{i+n}$, used for calculating SINR, are different. 


\begin{figure}[t]
	\vspace*{-6mm}
	\centering
	\begin{subfigure}[b]{0.5\textwidth}
		\includegraphics[width=\textwidth]{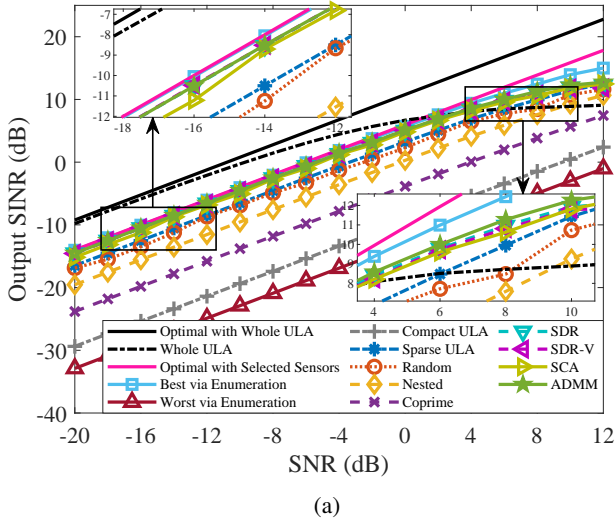}    
		\caption{}
		\label{sinrvssnr_(1)}
	\end{subfigure}
	~ 
	\begin{subfigure}[b]{0.5\textwidth}
		\includegraphics[width=\textwidth]{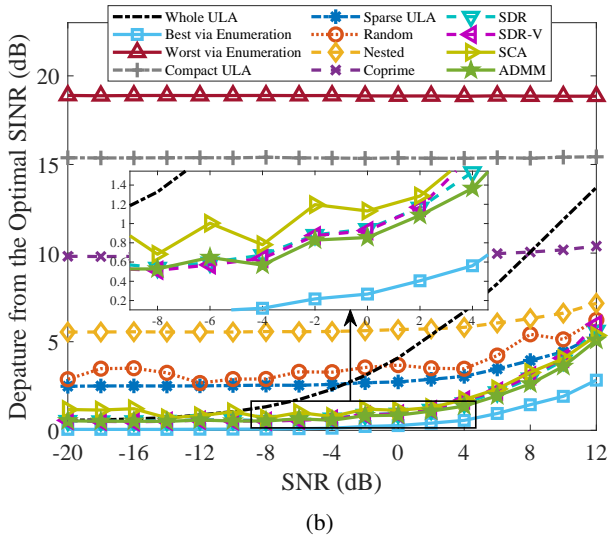}   
		\caption{}
		\label{departurevssnr}
	\end{subfigure}
	\caption{Output SINR versus input SNR. (8th example)}\label{sinrvssnr}
\end{figure}

\textbf{Ninth example:} We examine the output SINR versus the number of snapshots. The simulation setup is the same as that of the third example. The results are shown in Fig. \ref{sinrvssnapshot}, from which we observe that the output SINR of the proposed ADMM is close to that of the best case via enumeration, slightly larger than those of the SDR, SDR-V, and SCA approaches, and significantly larger than those of the others.

\begin{figure}[t]
	\vspace*{-2mm}
	\centering
	\begin{subfigure}[b]{0.5\textwidth}
		\includegraphics[width=\textwidth]{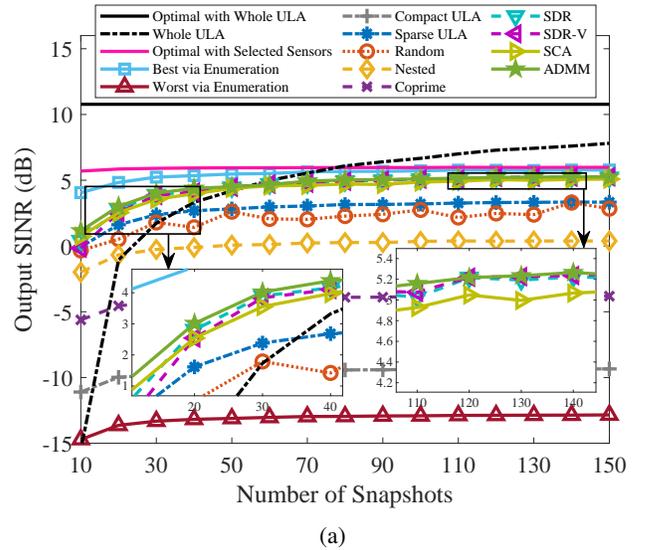}   
		\caption{}
		\label{sinrvssnapshot_(1)}
	\end{subfigure}
	~ 
	\begin{subfigure}[b]{0.5\textwidth}
		\includegraphics[width=\textwidth]{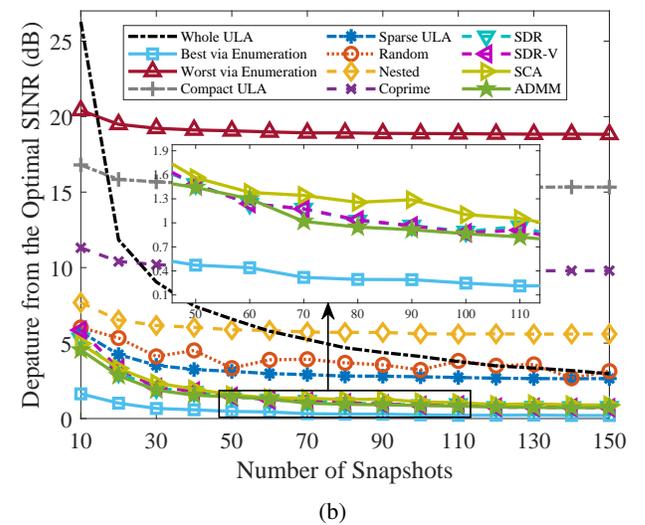}   
		\caption{}
		\label{departurevssnapshot}
	\end{subfigure}
	\caption{Output SINR versus $T$. (9th example)}\label{sinrvssnapshot}
\end{figure}

\textbf{Tenth example:} We examine the output SINR versus the number of sensors. The simulation setup is unchanged as that of the fourth example. The results are plotted in Fig. \ref{sinrvssensor}, from which we again observe that the output SINR of the ADMM is close to that of the best case via enumeration, slightly larger than those of the SDR, SDR-V, and SCA methods, and significantly larger than those of the others.

\begin{figure}[t]
	\vspace*{-3mm}
	\centering
	\begin{subfigure}[b]{0.5\textwidth}
		\includegraphics[width=\textwidth]{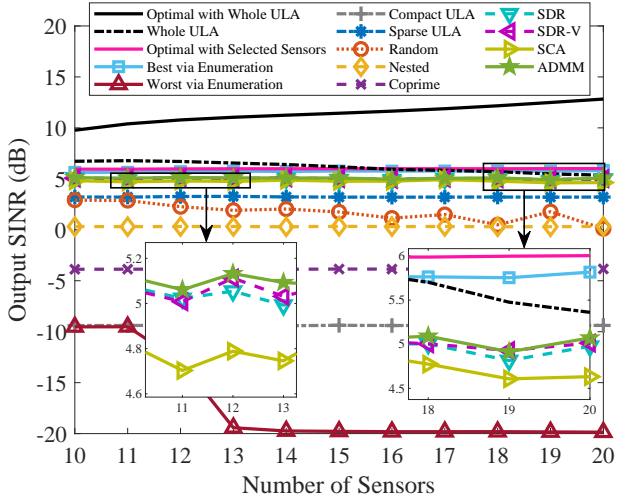}   
		\caption{}
		\label{sinrvssensor_(1)}
	\end{subfigure}
	~ 
	\begin{subfigure}[b]{0.5\textwidth}
		\includegraphics[width=\textwidth]{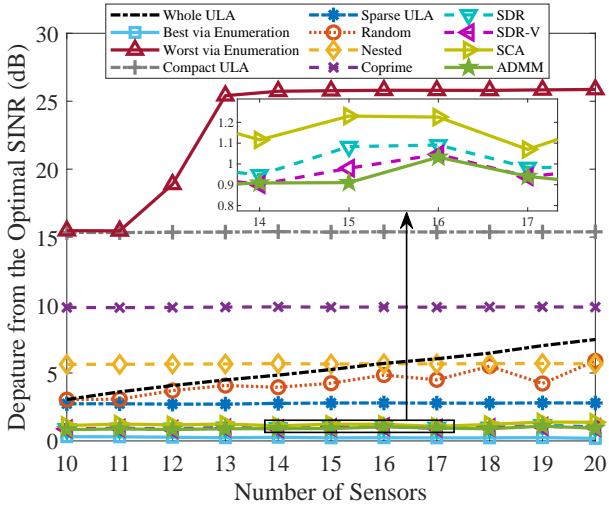}   
		\caption{}
		\label{departurevssensor}
	\end{subfigure}
	\caption{Output SINR versus $M$. (10th example)}\label{sinrvssensor}
\end{figure}

\textbf{Eleventh example:} We examine the output SINR versus the number of selected sensors. The simulation setup is the same as that of the fifth example. Note that, some previous methods are not included in this example since they do not apply to the entire range of $L$. The results are displayed in Fig. \ref{sinrvsselectsensor}. It is seen that when $L$ approaches $M = 12$, the output SINR of all the tested methods converge to one point, because in this case ($L = M$) all the methods select the same sensors, i.e., the whole ULA. On the other hand, when $L < 12$, the proposed ADMM has higher output SINR than the other tested approaches (except for the best case via enumeration).

\begin{figure}[t]
	\vspace*{-2mm}
	\centering
	\begin{subfigure}[b]{0.5\textwidth}
		\includegraphics[width=\textwidth]{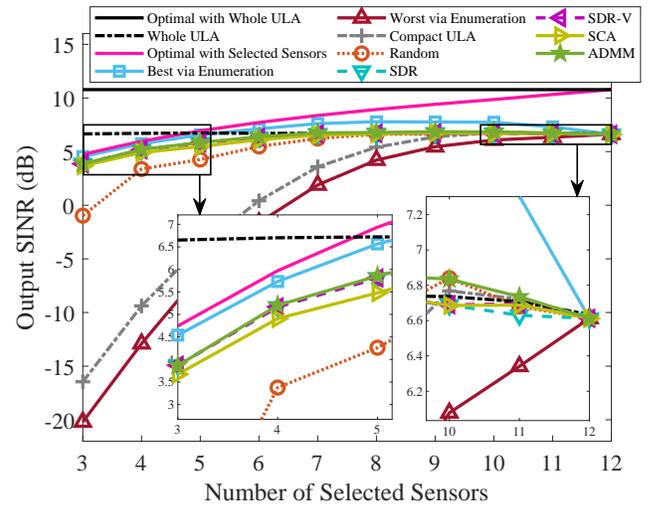}   
		\caption{}
		\label{sinrvsselectsensor_(1)}
	\end{subfigure}
	~ 
	\begin{subfigure}[b]{0.5\textwidth}
		\includegraphics[width=\textwidth]{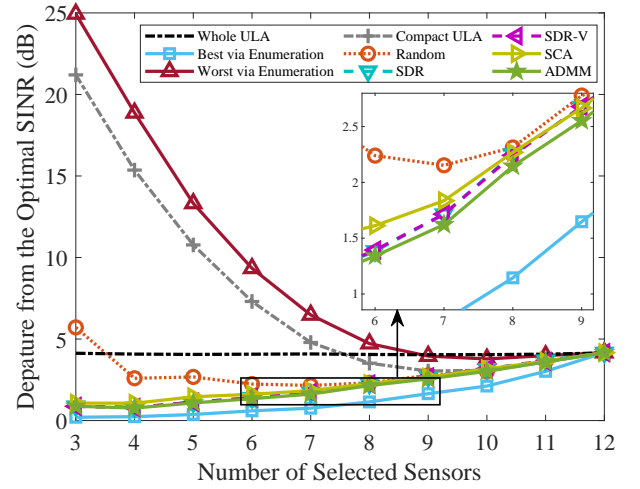}  
		\caption{}
		\label{departurevsselectsensor}
	\end{subfigure}
	\caption{Output SINR versus $L$. (11th example)}\label{sinrvsselectsensor}
\end{figure}

\section{Conclusion and Future Works}
\label{conclusion}
An algorithm based on alternating direction method of multipliers (ADMM) for sparse array beamformer design was proposed. Our approach provides closed-form solutions at each ADMM iteration. Theoretical analyses and numerical simulations were provided to show the convergence of the proposed algorithm. In addition, the algorithm was proved to converge to the set of stationary points. The ADMM algorithm was shown comparable to the exhaustive search method, and slightly better than the state-of-the-art solvers, including the semidefinite relaxation (SDR), an SDR variant (SDR-V), and the successive convex approximation (SCA) methods, and significantly better than several other sparse array design strategies in terms of output signal-to-interference-plus-noise ratio. Moreover, the proposed ADMM algorithm outperformed the SDR, SDR-V, and SCA approaches in terms of computational cost.

Possible future extensions of this work are given as follows.
\begin{itemize}
	\item In this work, we consider narrow-band sources. We might extend it to wide-band sources, as investigated in \cite{Hamza2021}.
	\item We might also consider the situation when the steering vector of SOI  is subject to uncertainties.
	\item We can also extend this work to multiple SOIs. In such a scenario, the problem should be formulated as one with multiple constraints. We can extend the proposed ADMM method to consensus-ADMM method. The results in our previous work \cite{Huang2023} can be applied to this situation.
\end{itemize}

\begin{appendices}

\section{Proof of Equivalence of Problems (\ref{w_har_inequality}) and (\ref{w_har_equality})}
\label{proof_equiv}
The Lagrangian function of Problem (\ref{w_har_inequality}) is given as
\begin{align}
L({\bf w} , \mu) = \|{\bf w} - {\bf{\bar{w}}}\|_{2}^{2} + \mu \left( 1 - |{\bf w}^{\textrm{H}}{\bf a}_{0}|^{2} \right),
\end{align} 
where $\mu \geq 0$ is the Lagrangian dual variable. The optimal solution ${\bf w}^{\star}$ and the optimal dual variable $\mu^{\star}$ satisfy the following identities:
\begin{subequations}
\begin{align}
\label{KKT_w_1}
\left. \frac{\partial L({\bf w} , \mu)}{\partial {\bf w}} \right|_{{\bf w} = {\bf w}^{\star}, \mu = \mu^{\star}} & = 2({\bf w}^{\star} - {\bf{\bar{w}}}) - 2 \mu^{\star} {\bf a}_{0}{\bf a}_{0}^{\textrm{H}}{\bf w}^{\star} = {\bf 0},  \\
\label{KKT_lambda_1}
\left. \frac{\partial L({\bf w} , \mu)}{\partial \mu} \right|_{{\bf w} = {\bf w}^{\star}, \mu = \mu^{\star}} & = 1 - |{\bf w}^{\star \textrm{H}}{\bf a}_{0}|^{2} = 0.
\end{align}
\end{subequations}
From (\ref{KKT_w_1}), we have ${\bf w}^{\star} = ({\bf I} - \mu^{\star}{\bf a}_{0}{\bf a}_{0}^{\textrm{H}})^{-1}{\bf{\bar w}}$. Substituting it into (\ref{KKT_lambda_1}) yields 
\begin{align}
1 - |{\bf{\bar w}}^{\textrm{H}}({\bf I} - \mu^{\star}{\bf a}_{0}{\bf a}_{0}^{\textrm{H}})^{-1}{\bf a}_{0}|^{2} = 0.
\end{align}
If $\mu^{\star} = 0$, the above equation becomes $1 - |{\bf{\bar w}}^{\textrm{H}} {\bf a}_{0}|^{2} = 0$, which contradicts the fact that ${\bf{\bar w}}$ does ${\textbf{not}}$ satisfy $|{\bf{\bar{w}}}^{\mathrm{H}}{\bf a}_{0}|^{2} \geq 1$. Therefore, the optimal dual variable can only be positive, i.e., $\mu^{\star} > 0$.

The complementary slackness condition \cite{Boyd2004_ch5} for Problem (\ref{w_har_inequality}) is 
\begin{align}
\mu^{\star} \geq 0 \text{ ~ and ~ } \mu^{\star} \left( 1 - |{\bf w}^{\star \textrm{H}}{\bf a}_{0}|^{2} \right) = 0.
\end{align} 
As mentioned earlier, we have $\mu^{\star} > 0$, which together with $\mu^{\star} \left( 1 - |{\bf w}^{\star \textrm{H}}{\bf a}_{0}|^{2} \right) = 0$, yields $|{\bf w}^{\star \textrm{H}}{\bf a}_{0}|^{2} = 1$. This indicates that the optimal solution to Problem (\ref{w_har_inequality}) should meet the equality constraint. In other words, Problem (19) is equivalent to Problem (\ref{w_har_equality}).

\section{Proof of Lemma \ref{lem_monotonicity}}
\label{proof_lem_monotonicity}
In order to show that $\mathcal{L}({\bf w}_{(k + 1)}, {\bf v}_{(k + 1)}, {\bf u}_{(k + 1)}) \leq \mathcal{L}({\bf w}_{(k)}, {\bf v}_{(k)}, {\bf u}_{(k)})$ holds $\forall k = 0, 1, 2, \cdots$, where the objective function is defined as $\mathcal{L}({\bf w}, {\bf v}, {\bf u}) \triangleq \lambda\|{\bf w}\|_{1} + {\bf v}^{\mathrm{H}}{\bf R}_{x}{\bf v} + \frac{\rho}{2} \left(\|{\bf w} - {\bf v} + {\bf u}\|_{2}^{2} - \|{\bf u}\|_{2}^{2} \right)$, we formulate their difference as
\begin{subequations}
\label{difference_L}
\begin{align}
& \mathcal{L}({\bf w}_{(k + 1)}, {\bf v}_{(k + 1)}, {\bf u}_{(k + 1)}) - \mathcal{L}({\bf w}_{(k)}, {\bf v}_{(k)}, {\bf u}_{(k)}) \nonumber \\
\label{difference_u}
\!\!\!\!\!\!\! = & ~ [\mathcal{L}({\bf w}_{(k + 1)}, \! {\bf v}_{(k + 1)}, \! {\bf u}_{(k + 1)}) \! - \! \mathcal{L}({\bf w}_{(k + 1)}, \! {\bf v}_{(k + 1)}, \! {\bf u}_{(k)})] \\
\label{difference_wv}
& ~ + [\mathcal{L}({\bf w}_{(k + 1)}, {\bf v}_{(k + 1)}, {\bf u}_{(k)}) - \mathcal{L}({\bf w}_{(k)}, {\bf v}_{(k)}, {\bf u}_{(k)})].
\end{align}
\end{subequations}

In what follows, we separately deal with (\ref{difference_u}) and (\ref{difference_wv}). For (\ref{difference_u}), it is calculated as
\begingroup
\allowdisplaybreaks
\begin{align*}
& \mathcal{L}({\bf w}_{(k + 1)}, {\bf v}_{(k + 1)}, {\bf u}_{(k + 1)}) -  \mathcal{L}({\bf w}_{(k + 1)}, {\bf v}_{(k + 1)}, {\bf u}_{(k)})  \\
\label{diff_u_a}
\stackrel{\text{(a)}}{=} & ~ \frac{\rho}{2}\left(\|{\bf w}_{(k+1)} - {\bf v}_{(k+1)} + {\bf u}_{(k+1)} \|_{2}^{2} - \|{\bf u}_{(k+1)}\|_{2}^{2} \right.  \\
& ~~~~~~~~~~ - \left. \|{\bf w}_{(k+1)} - {\bf v}_{(k+1)} + {\bf u}_{(k)}\|_{2}^{2} + \|{\bf u}_{(k)}\|_{2}^{2} \right) \\
\stackrel{\text{(b)}}{=} & ~ \frac{\rho}{2}\left( \|2{\bf u}_{(k+1)} - {\bf u}_{(k)}\|_{2}^{2} - 2\|{\bf u}_{(k+1)}\|_{2}^{2} + \|{\bf u}_{(k)}\|_{2}^{2} \right) \\
= & ~ \frac{\rho}{2}\left( 2\|{\bf u}_{(k+1)} - {\bf u}_{(k)}\|_{2}^{2} \right) \\
\stackrel{\text{(c)}}{=} & ~ {\rho} \left\| \frac{2}{\rho}{\bf R}_{x}{\bf v}_{(k+1)} - \frac{2}{\rho}{\bf R}_{x}{\bf v}_{(k)} \right\|_{2}^{2}  \\
\stackrel{\text{(d)}}{\leq} & ~ \frac{4}{\rho} \lambda_{\mathrm{max}}^{2}({\bf R}_{x} ) \| {\bf v}_{(k+1)} \! - \! {\bf v}_{(k)} \|_{2}^{2},
\end{align*}
\endgroup
where the definition of $\mathcal{L}({\bf w}, {\bf v}, {\bf u})$ is used in (a); ${\bf w}_{(k+1)} - {\bf v}_{(k+1)} = {\bf u}_{(k+1)} - {\bf u}_{(k)}$ (which is from Line 5 in Algorithm \ref{admm_alg}) has been utilized in (b); ${\bf u}_{(k+1)} = \frac{2}{\rho}{\bf R}_{x}{\bf v}_{(k+1)}$ (which is the result by combining Lines 4 and 5 in Algorithm \ref{admm_alg}) has been employed in (c); and inequality $\| {\bf R}_{x}{\bf v} \|_{2}^{2} = {\bf v}^{\mathrm{H}}{\bf R}_{x}^{\mathrm{H}}{\bf R}_{x}{\bf v} \leq {\bf v}^{\mathrm{H}}[\lambda_{\mathrm{max}}^{2}({\bf R}_{x}){\bf I}]{\bf v} = \lambda_{\mathrm{max}}^{2}({\bf R}_{x})\|{\bf v}\|_{2}^{2}$ has been used in (d).

Now we focus on (\ref{difference_wv}), which can be written as
\begingroup
\allowdisplaybreaks
\begin{align*}
& \mathcal{L}({\bf w}_{(k + 1)}, {\bf v}_{(k + 1)}, {\bf u}_{(k)}) - \mathcal{L}({\bf w}_{(k)}, {\bf v}_{(k)}, {\bf u}_{(k)}) \\
= & [\mathcal{L}({\bf w}_{(k + 1)}, {\bf v}_{(k + 1)}, {\bf u}_{(k)}) - \mathcal{L}({\bf w}_{(k+1)}, {\bf v}_{(k)}, {\bf u}_{(k)})] \\
& + [\mathcal{L}({\bf w}_{(k + 1)}, {\bf v}_{(k)}, {\bf u}_{(k)}) - \mathcal{L}({\bf w}_{(k)}, {\bf v}_{(k)}, {\bf u}_{(k)})] \\
\stackrel{\text{(a)}}{\leq} & \Big[ \Re\{ \! \langle \nabla_{{\bf v}} \mathcal{L}({\bf w}_{(k\!+\!1)}, \! {\bf v}_{(k\!+\!1)}, \! {\bf u}_{(k)}) , {\bf v}_{(k\!+\!1)} \!\! - \! {\bf v}_{(k)} \rangle \! \} \!-\! \frac{\gamma_{{\bf v}}}{2}\|{\bf v}_{(k\!+\!1)} \\ 
& - \! {\bf v}_{(k)}\|_{2}^{2} \Big] +  0  \\ 
\stackrel{\text{(b)}}{=} &  ~\! - \! \frac{\gamma_{{\bf v}}}{2}\|{\bf v}_{(k+1)} \! - \! {\bf v}_{(k)}\|_{2}^{2}   \\  
\stackrel{\text{(c)}}{=} &  ~\! - \! \left[ \! \lambda_{\mathrm{min}}({\bf R}_{x}) \! + \! \frac{\rho }{2} \right] \!\! \|{\bf v}_{(k+1)} \! - \! {\bf v}_{(k)}\|_{2}^{2},   
\end{align*}
\endgroup
where in (a) we used the following two facts: i) $\mathcal{L}({\bf w}, {\bf v}, {\bf u})$ is strongly convex w.r.t. ${\bf v}$ with parameter $\gamma_{{\bf v}} > 0 $ \cite{Ryu2015}, and ii) $\mathcal{L}({\bf w}_{(k + 1)}, {\bf v}_{(k)}, {\bf u}_{(k)}) - \mathcal{L}({\bf w}_{(k)}, {\bf v}_{(k)}, {\bf u}_{(k)}) \leq 0$ since ${\bf w}_{(k+1)}$ is optimal to Problem (\ref{ADMM_w}); in (b) we have used the optimality condition of Problem (\ref{ADMM_v}); in (c) we have used $\gamma_{{\bf v}} = 2\lambda_{\mathrm{min}}({\bf R}_{x}) + \rho$, which is due to the facts that the objective function $\mathcal{L}({\bf w}, {\bf v}, {\bf u})$ is twice continuously differentiable w.r.t. ${\bf v}$, and thus its strong convexity parameter $\gamma_{{\bf v}}$ satisfies $\nabla_{{\bf v}}^{2} \mathcal{L}({\bf w}, {\bf v}, {\bf u}) \succeq \gamma_{{\bf v}}{\bf I}$ for all ${\bf v}$ \cite{Ryu2015}.   

By substituting the above two inequalities back to (\ref{difference_L}), and denoting $\lambda_{\mathrm{max}}({\bf R}_{x})$ and $\lambda_{\mathrm{min}}({\bf R}_{x})$ as $\lambda_{\mathrm{max}}$ and $\lambda_{\mathrm{min}}$ for brevity, respectively, we have
\begin{align*}
\begin{array}{l}
\!\! \mathcal{L}({\bf w}_{(k + 1)}, {\bf v}_{(k + 1)}, {\bf u}_{(k + 1)}) - \mathcal{L}({\bf w}_{(k)}, {\bf v}_{(k)}, {\bf u}_{(k)}) \\
\!\!\! \leq \! \underbrace{ \left[ \! \frac{4}{\rho} \lambda_{\mathrm{max}}^{2} \! - \! \lambda_{\mathrm{min}} \! - \! \frac{\rho }{2} \! \right] \!\! \|{\bf v}_{(k \!+\! 1)} \! - \! {\bf v}_{(k)}\|_{2}^{2}}_{\text{(i)}}. 
\end{array}
\end{align*}
We observe that if $\rho < -\sqrt{\lambda_{\mathrm{min}}^{2} + 8\lambda_{\mathrm{max}}^{2} } - \lambda_{\mathrm{min}}$, which should be deleted as $\rho > 0$, or $\rho > \sqrt{\lambda_{\mathrm{min}}^{2} + 8\lambda_{\mathrm{max}}^{2} } - \lambda_{\mathrm{min}}$, the coefficient $\frac{4}{\rho} \lambda_{\mathrm{max}}^{2} \! - \! \lambda_{\mathrm{min}} \! - \! \frac{\rho }{2} < 0$ and thus $\text{(i)} \leq 0 $. Furthermore, because of $\sqrt{\lambda_{\mathrm{min}}^{2} + 8\lambda_{\mathrm{max}}^{2} } - \lambda_{\mathrm{min}} < \lambda_{\mathrm{min}} + 2\sqrt{2}\lambda_{\mathrm{max}} - \lambda_{\mathrm{min}} = 2\sqrt{2}\lambda_{\mathrm{max}}$, we have the conclusion that as long as $\rho \geq 2\sqrt{2}\lambda_{\mathrm{max}}({\bf R}_{x})$, $ \mathcal{L}({\bf w}_{(k + 1)}, {\bf v}_{(k + 1)}, {\bf u}_{(k + 1)}) - \mathcal{L}({\bf w}_{(k)}, {\bf v}_{(k)}, {\bf u}_{(k)}) \leq \text{(i)} \leq 0$. 

\section{Proof of Lemma \ref{lem_boundedness}}
\label{proof_lem_boundedness}
Note that the augmented Lagrangian function satisfies
\begin{align*}
 & \mathcal{L}({\bf w}, {\bf v}, {\bf u}) \\
 = ~ \! & \lambda\|{\bf w}\|_{1} + {\bf v}^{\mathrm{H}}{\bf R}_{x}{\bf v} + \frac{\rho}{2}\|{\bf w} \!-\! {\bf v} \!+\! {\bf u}\|_{2}^{2} - \frac{\rho}{2}\|{\bf u}\|_{2}^{2} \\
 \stackrel{\text{(a)}}{=} ~ \! & \lambda\|{\bf w}\|_{1} + {\bf v}^{\mathrm{H}}{\bf R}_{x}{\bf v} + \frac{\rho}{2}\|{\bf w} \!-\! {\bf v} \!+\! {\bf u}\|_{2}^{2} - \frac{\rho}{2}\left\|\frac{2}{\rho}{\bf R}_{x}{\bf v}\right\|_{2}^{2} \\
 \stackrel{\text{(b)}}{\geq} ~ \! & \lambda\|{\bf w}\|_{1} + {\bf v}^{\mathrm{H}}{\bf R}_{x}{\bf v} + \frac{\rho}{2}\|{\bf w} \!-\! {\bf v} \!+\! {\bf u}\|_{2}^{2} - \frac{2}{\rho} {\bf v}^{\mathrm{H}}[\lambda_{\mathrm{max}}^{2}({\bf R}_{x}){\bf I}]{\bf v}  \\
 = ~ \! & \lambda\|{\bf w}\|_{1} + {\bf v}^{\mathrm{H}} \! \left( \! {\bf R}_{x} \!-\! \frac{2}{\rho}\lambda_{\mathrm{max}}^{2}({\bf R}_{x}){\bf I} \! \right) \! {\bf v} + \frac{\rho}{2}\|{\bf w} \!-\! {\bf v} \!+\! {\bf u}\|_{2}^{2} \\
 \stackrel{\text{(c)}}{\geq} ~ \! & 0,
 \end{align*}
 where in (a) we have used ${\bf u} = \frac{2}{\rho}{\bf R}_{x}{\bf v}$ $\bigl($which is the result by combining (\ref{ADMM_u}) and (\ref{solution_v})$\bigr)$; in (b) we have used the inequality $\| {\bf R}_{x}{\bf v} \|_{2}^{2} = {\bf v}^{\mathrm{H}}{\bf R}_{x}^{\mathrm{H}}{\bf R}_{x}{\bf v} \leq {\bf v}^{\mathrm{H}}[\lambda_{\mathrm{max}}^{2}({\bf R}_{x}){\bf I}]{\bf v} $; and inequality (c) holds if ${\bf R}_{x} \!-\! \frac{2}{\rho}\lambda_{\mathrm{max}}^{2}({\bf R}_{x}){\bf I} \succeq 0$, which indicates (\ref{rho_lowbound}). Note that inequality (c) is not tight, since the $\ell_{1}$-norm term and the $\ell_2$-norm term could be bounded from below by some large positive value. Therefore, the second term, i.e., the quadratic term w.r.t. ${\bf v}$, has much space to be tuned, which results in the fact that the lower bound for $\rho$ in (\ref{rho_lowbound}) is not tight.

\section{Proof of Theorem \ref{theorem_convergence}}
\label{proof_theorem_convergence}
Denote $\mathcal{L}({\bf w}_{(k + 1)}, {\bf v}_{(k + 1)}, {\bf u}_{(k + 1)})$ and $\mathcal{L}({\bf w}_{(k)}, {\bf v}_{(k)}, {\bf u}_{(k)})$ by $\mathcal{L}_{(k+1)}$ and $\mathcal{L}_{(k)}$, respectively. According to Lemmata \ref{lem_monotonicity} and \ref{lem_boundedness}, the objective function value sequence $\{\mathcal{L}_{(k)}\}$ produced by Algorithm \ref{admm_alg} converges. 
Further, since sequence $\{\mathcal{L}_{(k)}\}$ converges if $\rho$ satisfies (\ref{rho_condition}), we have $\mathcal{L}_{(k+1)} - \mathcal{L}_{(k)} \rightarrow 0$ as $k \rightarrow \infty$. On the other hand, we know from Appendix \ref{proof_lem_monotonicity} that $\mathcal{L}_{(k + 1)} - \mathcal{L}_{(k)} \leq \text{(i)}  \leq 0$, as long as (\ref{rho_condition}) holds, where $\text{(i)}$ is defined in Appendix \ref{proof_lem_monotonicity}. Therefore, when (\ref{rho_condition}) holds and $k \rightarrow \infty$, we have
\begin{align}
0 = \mathcal{L}_{(k+1)} - \mathcal{L}_{(k)} \leq \text{(i)} \leq 0,
\end{align}
meaning that $\text{(i)} = 0$.
This further indicates that 
\begin{align}
\label{wv_k}
 {\bf v}_{(k+1)} = {\bf v}_{(k)}.   
\end{align}
As already mentioned in Appendix \ref{proof_lem_monotonicity}, ${\bf u} = \frac{2}{\rho}{\bf R}_{x}{\bf v}$. By jointly considering ${\bf u} = \frac{2}{\rho}{\bf R}_{x}{\bf v}$ and ${\bf v}_{(k+1)} = {\bf v}_{(k)}$ in (\ref{wv_k}), we obtain
\begin{align}
\label{u_k}
{\bf u}_{(k+1)} = {\bf u}_{(k)}.
\end{align}
Note that the update of ${\bf w}_{(k+1)}$ is based on ${\bf v}_{(k)}$ and ${\bf u}_{(k)}$ (see Lines 2-4 in Algorithm \ref{admm_alg}). This together with (\ref{wv_k}) and (\ref{u_k}) imply that: when $k \rightarrow \infty$, we have
\begin{align}
\label{w_k}
{\bf w}_{(k+1)} = {\bf w}_{(k)}.
\end{align}

Moreover, combining (\ref{u_k}) and ${\bf u}_{(k+1)} = {\bf u}_{(k)} + {\bf w}_{(k+1)} - {\bf v}_{(k+1)}$ (i.e., Line 5 in Algorithm \ref{admm_alg}) yields
\begin{align}
\label{wv_k1}
{\bf w}_{(k+1)} = {\bf v}_{(k+1)},
\end{align}
as $k \rightarrow \infty$. Equivalently, ${\bf w}_{(k)} = {\bf v}_{(k)}$ as $k \rightarrow \infty$.

\section{Proof of Theorem \ref{theorem_limit_stationary}}
\label{proof_theorem_limit_stationary}
The Lagrangian function of (\ref{ADMM_auxility}) is given by
\begin{align}
\label{Lag}
\lambda\|{\bf w}\|_{1} + {\bf v}{\bf R}_{x}{\bf v} + \mu(|{\bf w}^{\mathrm{H}}{\bf a}_{0}| - 1) + \Re\{ \langle {\bf y} , {\bf w} \! - \! {\bf v} \rangle \},
\end{align}
where $\mu$ and ${\bf y}$ are Lagrangian dual variables corresponding to the inequality and equality constraints, respectively. Note that the (Lagrangian) dual variable ${\bf y}$ and the scaled dual variable ${\bf u}$ are related to each other as ${\bf y} = \rho{\bf u}$ \cite{Boyd2011}. A KKT point $({\bf w}^{\star}, {\bf v}^{\star})$ of Problem (\ref{ADMM_auxility}), together with the corresponding dual variables $\mu^{\star}$ and ${\bf y}^{\star}$, satisfies \cite{Hong2016}
\begin{subequations}
\label{KKTconditions}
\begin{align}
\label{KKT_a}
{\bf 0} & = 2{\bf R}_{x}{\bf v}^{\star} - {\bf y}^{\star}, \\
\label{KKT_b}
{\bf w}^{\star} & \in \arg\min_{{\bf w}}  \left\{ \!\!
\begin{array}{l}
 \lambda\|{\bf w}\|_{1} + \mu^{\star}(|{\bf w}^{\mathrm{H}}{\bf a}_{0}|^{2} - 1) \\
 + ~\! \Re\{ \langle {\bf y}^{\star} , {\bf w} \! - \! {\bf v}^{\star} \rangle \}
 \end{array}
\!\! \right\}, \\
\label{KKT_c}
{\bf w}^{\star} & = {\bf v}^{\star}.
\end{align}
\end{subequations}

Our aim is to show any limit point of Algorithm \ref{admm_alg}, referred to as $({\bf w}_{(k+1)}, {\bf v}_{(k+1)}, {\bf u}_{(k+1)})$ or $({\bf w}_{(k+1)}, {\bf v}_{(k+1)}, {\bf y}_{(k+1)}/\rho)$, satisfies (\ref{KKTconditions}). Firstly, note that Line 4 in Algorithm \ref{admm_alg} indicates 
\begin{align}
\label{2Rv}
2{\bf R}_{x}{\bf v}_{(k+1)} - \rho({\bf w}_{(k+1)} - {\bf v}_{(k+1)} + {\bf u}_{(k)}) = {\bf 0}.
\end{align} 
Jointly considering (\ref{u_k}), (\ref{wv_k1}), (\ref{2Rv}), and ${\bf y}_{(k+1)} = \rho{\bf u}_{(k+1)}$ yields $2{\bf R}_{x}{\bf v}_{(k+1)} - {\bf y}_{(k+1)} = {\bf 0}$, which is (\ref{KKT_a}). Additionally, (\ref{wv_k1}) shows that (\ref{KKT_c}) is also achieved. 

Now we turn to (\ref{KKT_b}). According to (\ref{ADMM_w}), we have: ${\bf w}_{(k+1)}$ \vspace{-4mm}
\begingroup
\allowdisplaybreaks
\begin{align*}
= & \arg\min_{{\bf w}} ~ \lambda\|{\bf w}\|_{1} \! + \! \frac{\rho}{2}\|{\bf w} \! - \! {\bf v}_{(k)} \! + \! {\bf u}_{(k)}\|_{2}^{2} \quad \text{s.t.} ~ |{\bf w}^{\mathrm{H}}{\bf a}_{0}|^{2} \geq 1 \\
\stackrel{\text{(a)}}{=} & \arg\min_{{\bf w}} ~ \lambda\|{\bf w}\|_{1} \! + \! \mu^{\star}(|{\bf w}^{\mathrm{H}}{\bf a}_{0}|^{2} \! - \! 1) \! + \! \frac{\rho}{2}\|{\bf w} \! - \! {\bf v}_{(k)} \! + \! {\bf u}_{(k)}\|_{2}^{2} \\
= & \arg\min_{{\bf w}} \left\{ \!\!
\begin{array}{l}
 \lambda\|{\bf w}\|_{1} \! + \! \mu^{\star}(|{\bf w}^{\mathrm{H}}{\bf a}_{0}|^{2} \! - \! 1) \! + \! \frac{\rho}{2}\|{\bf w} - {\bf v}_{(k)}\|_{2}^{2}   \\
 + \frac{\rho}{2}\|{\bf u}_{(k)}\|_{2}^{2} \! + \! \rho\Re\{ \langle {\bf u}_{(k)} , {\bf w} \! - \! {\bf v}_{(k)} \rangle \}
 \end{array}
 \!\! \right\} \\
\stackrel{\text{(b)}}{=} & \arg\min_{{\bf w}} ~ \lambda\|{\bf w}\|_{1} \! + \! \mu^{\star}(|{\bf w}^{\mathrm{H}}{\bf a}_{0}|^{2} \! - \! 1) \! + \! \rho\Re\{ \langle {\bf u}_{(k)} , {\bf w} \! - \! {\bf v}_{(k)} \rangle \} \\
\stackrel{\text{(c)}}{=} & \arg\min_{{\bf w}} ~ \lambda\|{\bf w}\|_{1} \! + \! \mu^{\star}(|{\bf w}^{\mathrm{H}}{\bf a}_{0}|^{2} \! - \! 1) \! + \! \Re\{ \! \langle {\bf y}_{(k+1)} , \! {\bf w} \! - \! {\bf v}_{(k+1)} \rangle \! \},
\end{align*}
\endgroup
where in (a) we have written the constraint into the objective function by involving its optimal dual variable $\mu^{\star}$; in (b) we have utilized the facts that ${\bf w}_{(k+1)} = {\bf w}_{(k)} = {\bf v}_{(k)}$ at any limit point, and $\frac{\rho}{2}\|{\bf u}_{(k)}\|_{2}^{2}$ is a scalar term unrelated to ${\bf w}$; in (c) we have used $\rho{\bf u}_{(k)} = \rho{\bf u}_{(k+1)} = {\bf y}_{(k+1)}$ and ${\bf v}_{(k)} = {\bf v}_{(k+1)}$. This completes the proof of Theorem \ref{theorem_limit_stationary}.

\section{Proof of Lemma \ref{lemma_KKT_originalProb}}
\label{proof_lemma_KKT_originalProb}
First of all, by introducing an auxiliary variable into the original Problem (\ref{L0_norm}), the problem is equivalently rewritten as
\begin{subequations}
\label{L0_norm_res}
\begin{align}
\min_{{\bf w}, {\bf v}} ~ & {\bf v}^{\mathrm{H}}{\bf R}_{x}{\bf v}  \\
\text{subject to} ~ & \left\{ \begin{array}{l}
|{\bf w}^{\mathrm{H}}{\bf a}_{0}|^{2} \geq 1, \\
\|{\bf w}\|_{0} = L, \\
{\bf w} = {\bf v}.
\end{array}  \right.
\end{align}
\end{subequations}
The Lagrangian function of the above problem is given by
\begin{align}
\label{Lag}
{\bf v}{\bf R}_{x}{\bf v} + \mu(|{\bf w}^{\mathrm{H}}{\bf a}_{0}|^{2} - 1) + \nu(\|{\bf w}\|_{0} - L) + \Re\{ \langle {\bf y} , {\bf w} \! - \! {\bf v} \rangle \},
\end{align}
where $\mu$, $\nu$, and ${\bf y}$ are Lagrangian dual variables corresponding to the three constraints, respectively. A KKT point $({\bf w}^{\star}, {\bf v}^{\star})$ of Problem (\ref{L0_norm_res}), together with the corresponding dual variables $\mu^{\star}$, $\nu^{\star}$, and ${\bf y}^{\star}$, satisfies
\begin{subequations}
\label{KKTconditions_res}
\begin{align}
\label{KKT_a_res}
{\bf 0} & = 2{\bf R}_{x}{\bf v}^{\star} - {\bf y}^{\star}, \\
\label{KKT_b_res}
{\bf w}^{\star} & \in \arg\min_{{\bf w}}  \left\{ \!\!
\begin{array}{l}
 \mu^{\star}(|{\bf w}^{\mathrm{H}}{\bf a}_{0}|^{2} - 1) + \nu^{\star}(\|{\bf w}\|_{0} - L) \\
 + ~\! \Re\{ \langle {\bf y}^{\star} , {\bf w} \! - \! {\bf v}^{\star} \rangle \}
 \end{array}
\!\! \right\}, \\
\label{KKT_c_res}
{\bf w}^{\star} & = {\bf v}^{\star}.
\end{align}
\end{subequations}
Our aim is to show any limit point of Algorithm 1, referred to as $({\bf w}_{(k+1)}, {\bf v}_{(k+1)}, {\bf u}_{(k+1)})$, satisfies (\ref{KKTconditions_res}). As has been proven, $({\bf w}_{(k+1)}, {\bf v}_{(k+1)}, {\bf u}_{(k+1)})$ satisfies (\ref{KKTconditions}). Since (\ref{KKT_a_res}) and (\ref{KKT_c_res}) are the same as (\ref{KKT_a}) and (\ref{KKT_c}), respectively. The limit point satisfies (\ref{KKT_a_res}) and (\ref{KKT_c_res}) due to the same reasons presented for (\ref{KKT_a}) and (\ref{KKT_c}), see Appendix E. Our focus is then to show the limit point satisfies (\ref{KKT_b_res}). We have
\begin{align*}
{\bf w}_{(k+1)} \stackrel{\text{(a)}}{=} & \arg\min_{{\bf w}} \left\{ \!\!
\begin{array}{l}
 \lambda\|{\bf w}\|_{1} \! + \! \mu^{\star}(|{\bf w}^{\mathrm{H}}{\bf a}_{0}|^{2} \! - \! 1) \\
 + ~\! \Re\{ \! \langle {\bf y}_{(k+1)} , \! {\bf w} \! - \! {\bf v}_{(k+1)} \rangle \! \} 
 \end{array}
 \!\! \right\}   \\
 \stackrel{\text{(b)}}{=} & \arg\min_{{\bf w}} \! \left\{ \! \mu^{\star}(|{\bf w}^{\mathrm{H}}{\bf a}_{0}|^{2} \! - \! 1) \! + \! \Re\{ \! \langle {\bf y}_{(k+1)} , \! {\bf w} \! - \! {\bf v}_{(k+1)} \rangle \! \} \! \right\} \\ 
 & {\qquad  ~ \text{s.t.}~} \|{\bf w}\|_{0} = L  \\
 \stackrel{\text{(c)}}{=} & \arg\min_{{\bf w}}  \left\{ \!\!
 \begin{array}{l}
 \mu^{\star}(|{\bf w}^{\mathrm{H}}{\bf a}_{0}|^{2} \! - \! 1) + \nu^{\star}(\|{\bf w}\|_{0} - L) \\
 + ~\! \Re\{ \langle {\bf y}^{\star} , {\bf w} \! - \! {\bf v}^{\star} \rangle \}
 \end{array}
 \!\! \right\},
 \end{align*}
 where in (a) we have used the fact that the limit point satisfies (\ref{KKT_b}); in (b) we have used the fact that under some conditions, such as restricted isometry property, and by carefully tuning $\lambda$, the solution of (11) could be the same as that of Problem (\ref{L0_norm}) \cite{Donoho2003, Foucart2013}; and in (c) we have included the constraint into the objective function by involving its optimal dual variable $\nu^{\star}$. The above equation indicates that the limit point satisfies (\ref{KKT_b_res}). This completes the proof of Lemma \ref{lemma_KKT_originalProb}.

\end{appendices}

%

\bibliographystyle{myIEEEtran}       
\bibliography{refs}

%

\end{document}